\documentclass[a4document]{article}
\usepackage[margin=1in]{geometry}
\pdfoutput=1
\usepackage[utf8]{inputenc}
\usepackage[english]{babel}
\usepackage[T1]{fontenc}
\usepackage{amsthm}
\usepackage{amsmath,amssymb,dsfont}
\usepackage{graphicx}       
\usepackage{physics}
\usepackage{hyperref}
\usepackage{relsize}
\usepackage{tcolorbox}
\usepackage{thm-restate}
\hypersetup{
    colorlinks=true,
    linkcolor=blue,
    citecolor=magenta,
    filecolor=magenta,      
    urlcolor=blue,
    pdftitle={},
    pdfpagemode=FullScreen,
    }
\usepackage{soul}
\usepackage{mathtools}
\usepackage{enumerate}
\usepackage[normalem]{ulem}
\usepackage{authblk}

\usepackage[backend=biber,style=alphabetic,url=false,doi=false]{biblatex}
\newbibmacro{string+doi}[1]{
  \iffieldundef{doi}{
\iffieldundef{url}{#1}{\href{\thefield{url}}{#1}}
}{\href{http://dx.doi.org/\thefield{doi}}{#1}}}
\DeclareFieldFormat{title}{\usebibmacro{string+doi}{\mkbibemph{#1}}}
\DeclareFieldFormat[article]{title}{\usebibmacro{string+doi}{#1}}
\AtEveryBibitem{\clearfield{issn}\clearfield{primaryClass}\clearfield{note}
\clearfield{month}}
\usepackage{graphicx,color,colordvi}
\usepackage{tikz}
\usepackage{ifthen}
\usepackage{pgfplots}
\pgfplotsset{compat=1.9}
\usetikzlibrary{shapes,arrows.meta}
\usetikzlibrary{positioning}
\usetikzlibrary{shapes.geometric}

\usepackage{tikz}
\usepackage{quantikz}
\usetikzlibrary{shapes}
\usepackage[dvipsnames]{xcolor}
\usepackage{float}
\makeatletter
\newlength\min@xx

\makeatother

\newtheorem{theorem}{Theorem}

\newtheorem{lemma}{Lemma}
\newtheorem{corollary}{Corollary}
\newtheorem{definition}{Definition}
\newtheorem{remark}{Remark}

\DeclareMathOperator{\spec}{spec}
\DeclareMathOperator{\Prep}{Prep}

\newcommand{\cL}{\mathcal L}

\newcommand{\cO}{\mathcal O}
\newcommand{\cH}{\mathcal H}
\newcommand{\cB}{\mathcal B}

\newcommand{\kA}{\mathfrak{A}}

\DeclareMathOperator{\poly}{poly}
\DeclareMathOperator{\polylog}{polylog}
\newcommand{\eps}{\varepsilon}

\newcommand{\CC}{\mathbb{C}}
\newcommand{\RR}{\mathbb{R}}
\newcommand{\ZZ}{\mathbb{Z}}

\newcommand{\id}{\mathds{1}}

\definecolor{Turquoise}{HTML}{14C7DE}
\definecolor{SkyBlue}{HTML}{3498DB}
\definecolor{CoralDark}{HTML}{FF6F61} 
\definecolor{purplemod}{HTML}{AF82F3}
\definecolor{darkgreen}{RGB}{106, 134, 104}
\definecolor{darkgreen2}{RGB}{90, 168, 143}

\title{Dissipative microcanonical ensemble preparation from KMS-detailed balance}
\author[1]{Anirban N. Chowdhury}
\author[1,2]{Samuel O. Scalet}
\author[3]{Kunal Sharma}
\affil[1]{IBM Quantum, IBM T.J. Watson Research Center, Yorktown Heights, NY 10598, USA}
\affil[2]{UC Davis, CA 95616, USA}
\affil[3]{IBM Research, Chicago, IL 60606, USA}
\begin{document}

\date{}

\maketitle

\begin{abstract}

Stationary states of quantum many-body Hamiltonians are invariant under the Hamiltonian evolution. Besides ground and thermal states,  this class includes microcanonical ensembles that are of fundamental importance in statistical physics. We consider the preparation of general stationary states by leveraging recent advances in the field of open-system dynamics. In particular, constructions based on exact KMS-detailed balance with respect to Gibbs states of noncommuting Hamiltonians have only recently been proposed as a tool for their efficient preparation and, by extension to small temperatures, for ground state preparation. We extend these constructions to the problem of stationary state preparation, providing general criteria that characterize when such states have efficient implementations, along with specific results on the approximation of microcanonical ensembles. An interesting application of our work are tests of conjectured ensemble equivalences for local observables between microcanonical and Gibbs ensembles.
\end{abstract}
\section{Introduction}

Markov Chain Monte Carlo algorithms are a highly successful algorithmic primitive with vast applications.
In the regime of (classical) statistical physics their most famous implementation is the Metropolis-Hastings algorithm \cite{Hastings1970}, a physics inspired routine for simulating spin systems.
It is based on iterative updates of a spin configuration by randomly selecting a spin and flipping it according to some energy-dependent probabilistic rule, inspired by the physical, local process of thermalization.
It satisfies a so-called detailed balance condition, namely the transition rates $P_{ij}$ between configurations $i$ and $j$ of the stochastic process satisfy the equality
\[
P_{ij} p(j)=P_{ji}p(i)
\]
where $p$ is the probability distribution of the target state.

A quantum analogue of such a physical thermalization process has been known for many decades.
The Davies generator \cite{Davies1974} has been originally proposed as a model in mathematical physics for the process of thermalization of quantum systems weakly coupled to a large Markovian bath.
When restricting to commuting Hamiltonians, it turns out that the resulting evolution can be efficiently implemented due to its locality and has been proposed as an algorithmic tool \cite{kastoryano_quantum_2016,capel_modified_2020}.

An extension of such algorithmic ideas to the noncommuting setting, has only been achieved in recent years in a series of works \cite{chen2023quantumthermalstatepreparation,chen2023efficientexactnoncommutativequantum,Ding_2025,gilyen_quantum_2024,Temme_2011}.
The main challenge is that directly carrying over the formulation of Davies generators results in highly nonlocal dynamics and indeed naive attempts of quasi-local approximations do not succeed.
The aforementioned works overcame this issue, proposing the first quasi-local, approximately and exactly detailed balance Lindblad dynamics for the Gibbs state.
Common to all these approaches is their access model to the system Hamiltonian, which is only given via an implementation of its time evolution.

In this work we explore the applications of these recent algorithmic techniques to state preparation tasks beyond Gibbs states.
In the most general setting, we consider stationary states of the Hamiltonian $H$, i.e., states that can be expressed as functions of the Hamiltonian $\sigma=f(H)$ with $f$ some nonnegative real function.
We modify prior constructions to obtain \emph{KMS-detailed balance Lindbladians}.
\[
\cL(\rho)=-i[G,\rho]+\sum_{j\in\mathcal J}L_j\rho L_j^\dagger-\frac12\{L_j^\dagger L_j,\rho\}
\]
The KMS-detailed balance, analogously to the classical case, is a technical condition that ensures that the target state is a fixed point of the evolution $\sigma=e^{t\cL}(\sigma)$.
We provide efficient implementations of \emph{block-encodings} of the coherent and dissipative generators $G$ and $L_j$.
By referring to known Lindbladian simulation techniques these lead to efficient implementations of $e^{t\cL}$ \cite{cleve2019,li_simulating_2023}.
Our criteria are general and ensure efficiency for any target state defined in terms of a function $f$ that is \emph{twice real differentiable}.
As in the Gibbs sampling case, the algorithm accesses the Hamiltonian solely through an implementation of its time evolution.

For a concrete application, we apply the framework to the task of preparing a variant of the microcanonical ensemble. Recall that in statistical mechanics, the microcanonical ensemble is defined to be a uniform distribution over states with (or close to) a fixed value of the energy \cite{griffiths1965microcanonical,Dorje2007microcanonical}. Here, we consider the more general task of preparing a ``window state'', a uniform mixture of eigenstates of the Hamiltonian with eigenvalues within a given energy window. Such states have been considered in studying equivalence of microcanonical and canonical ensembles \cite{kuwahara2020gaussian}.
In addition, the approximations we use for window functions can yield ground states of gapped Hamiltonians.

The starting point of our construction is mostly the work of~\cite{Ding_2025}. 
On the analytical side, the authors give a complete characterization of KMS-detailed balance Lindbladians, which remains valid beyond Gibbs states.
On the implementation side, the discussion in~\cite{Ding_2025} is restricted to Gibbs samplers and we contribute alternative implementations extending to the more general class of states mentioned above.

It should be noted, that the algorithm presented here does not come with guarantees on the mixing time. The efficiency only concerns the Lindbladian evolution itself.
Fast or rapid mixing should in general not be expected for the entirety of states for which we can efficiently construct converging dynamics due to known hardness results as it includes Gibbs states as well as ground states with inverse polynomial gap.
We leave specific results for regimes where fast mixing can be proven as an open problem.
One apparent candidate regime is that of high-temperature Gibbs states for which prior works have shown fast mixing \cite{rouze2024efficientthermalizationuniversalquantum,rouze2024optimalquantumalgorithmgibbs}, but additional work is required to carry them over to the modified construction presented here.
Similarly, in analogy to prior works on the dissipative preparation of ground states \cite{zhan2025,tong2025,smid2025}, efficiency of our scheme in the corresponding setting can be expected.
On the more heuristic side, rather than working with such a strong theoretical result one could argue whether weaker notions of mixing could hold, such as ones restricted to local observables and/or restricted sets of input states. 
In particular, in light of ensemble equivalences, we expect favourable mixing times for microcanonical ensembles in energy ranges that correspond to the expected energy in high temperature Gibbs states.

\section{Preliminaries}
Throughout the paper we will consider a multipartite Hilbert space $\cH=\otimes_{v\in \Lambda}\cH_v$, where $\cH\cong\CC^d$ and $\Lambda$ is a finite set.
We further consider a Hamiltonian $H\in\cB(\cH)$.
We do not make further assumptions on $H$ but note that our algorithmic cost will be expressed in terms of Hamiltonian evolution time, a task that can be efficiently performed for instance for local Hamiltonians \cite{lloyd1996universal,berry2015simulating,low2019hamiltonian}.
We define an orthonormal basis of eigenstates $\ket{\psi_k}$ and the corresponding projectors $P_k=\ketbra{\psi_k}{\psi_k}$ for the associated eigenvalues $E_k$ such that $H=\sum_kE_kP_k$.
For a function $f:\spec(H)\mapsto\RR^+$, we define the full-rank state $\sigma=f(H)/\Tr[f(H)]$.
We will often consider continuous extensions $f:[-S,S]\mapsto\RR^+$ for some $S\ge\|H\|$ and $I=[-S,S]$.
A prominent example of such a state is the Gibbs state $\exp(-\beta H)/\Tr[\exp(-\beta H)]$. Here we will be interested in the window state (or microcanonical ensemble), which has the form
\[
\chi_{[a,b]}(H)/\Tr[\chi_{[a,b]}(H)]\,,
\]
where the indicator function $\chi_A(x)=1$ if $x\in A$ and $\chi_A(x)=\eta$ otherwise. The small constant $\eta>0$ ensures that the state is full-rank, which is needed for technical reasons. However, it can be chosen exponentially small in system size as we will discuss in a later section.

\paragraph{}
The paradigm of dissipative state preparation is to model time-evolution of a quantum system in contact with a Markovian heat-bath, so-called open system dynamics, which is designed such that the dynamics converges to the target state.
This is modeled by the Lindblad equation $\frac d{dt}\rho(t)=\cL(\rho(t))$
, i.e., $\rho(t)=e^{t\cL}(\rho(0))$.
For $e^{t\cL}$, to be a valid family of channels (completely positive, trace-preserving, linear maps) $\cL$ needs to be of the form
\[
\cL(\rho)=-i[G,\rho]+\sum_{j\in \mathcal J} L_j\rho L_j^\dagger-\frac12\{L_j^\dagger L_j,\rho\}\,,
\]
where $G=G^\dagger$ is a hermitian operator and $L_j$ are arbitrary operators forming the dissipative term.
Conversely every map of the above form generates a Lindbladian.
While the first term $G$ generates a coherent evolution, the $L_j$ form the dissipative part of the evolution.
Physically, $G$ is often identified with the system Hamiltonian and the $L_j$ are derived from a coupling limit, but for our purely algorithmic purposes we focus on designing and implementing these terms.
We call a Lindbladian faithful if it admits a full-rank invariant state $\cL(\sigma)=0$, i.e., $e^{t\cL}(\sigma)=\sigma$.
Let us further define the Hilbert-Schmidt inner product $\langle X,Y\rangle=\Tr[X^\dagger Y]$ on $\cB(\cH)$ and the  $\sigma$-KMS inner product
$\langle X,Y\rangle_{\sigma,KMS}=\Tr[X^\dagger\sigma^{1/2}Y\sigma^{1/2}]$.
A Lindbladian is called $\sigma$-KMS-detailed balance if $\cL^\dagger$ is self-adjoint with respect to this inner product.
This is a sufficient condition for stationarity of $\sigma$ by the standard calculation 
\[
0=\langle X,\cL^\dagger(\id)\rangle_{\sigma,KMS}=\langle\cL^\dagger(X),\id\rangle_{\sigma,KMS}=\langle\cL^\dagger(X),\sigma\rangle=\langle X,\cL(\sigma)\rangle
\]
for every $X\in\cB(\cH)$ and so $\cL(\sigma)=0$.
 In addition a faithful Lindbladian is called primitive if its fixed-point $\sigma$ is unique and thereby $\lim_{t\to\infty}e^{t\cL}(\rho)=\sigma$ $\forall\rho$.

\paragraph{}
For implementation purposes we will make use of efficient block-encodings.
\begin{definition}[Approximate block-encoding]
We say that a unitary $U$ is a $(\alpha,m,\eps)$-block encoding of a matrix $B$ if
\[
\left\|B/ \alpha-(\bra{0^{\otimes m}}\otimes\id_\cH) U(\ket{0^{\otimes m}}\otimes\id_\cH)\right\|\le\eps
\]
\end{definition}
\cite{li_simulating_2023} provides an efficient simulation algorithm of Lindbladian evolutions given an approximate block encoding of its jump operators, which is why we focus in the implementation of such a block encoding.
The block encoding will be based on a block encoding of jump proposals $\{A_a:a\in\mathfrak{A}\}$  of the following form:\footnote{While \cite{li_simulating_2023} names individual block-encodings for each $L_a$ as a requirement, in fact the implementation only makes use of this joint block-encoding.}
\[
U_\mathfrak A=\sum_{a\in\mathfrak A}\ketbra{a}{a}\otimes A_a
\]
Note that for the standard choice $\{A_a:a\in\mathfrak{A}\}=\{X_i,Y_i,Z_i:i\in\Lambda\}$, where the index register is of size $\lceil\log_2(3|\Lambda|)\rceil$, this unitary can be implemented in circuit size $\mathcal{O}(|\Lambda|\log(|\Lambda|))$ without ancillas using a sequence of multi-qubit Toffoli gates~\cite{nie2024quantumcircuitmultiqubittoffoli}.
While we do not want to explicitly assume the above choice of jump proposals,\footnote{This is because "block updates" with multi-spin jump proposals have proven to be useful in some settings to ensure good mixing times \cite{ding2025polynomialtimepreparationlowtemperaturegibbs,guo2018}.} we will henceforth assume $\|A_a\|\le1$ to simplify notation. This comes with no loss of generality as it can always be achieved by rescaling the Lindbladian and thereby the mixing time.

\paragraph{}
We define the Fourier series for a continuous function $f:[-S,S]^D\to\CC$ such that $f(S)=f(-S)$ with the convention
\[
f(x_1,\ldots,x_D)=\sum_{(x_1,\ldots,x_D)\in\ZZ^D}f_{n_1,\ldots,n_D} e^{i\langle x,n\rangle\tau},
\]
where $\tau=\pi/S$, i.e.,
\[
f_{n_1,\ldots,n_D}=\frac{1}{(2S)^D}\int_{[-S,S]^D} e^{-i\langle n,x\rangle\tau} f(x_1,\ldots,x_D)d(x_1,\ldots,x_D)
\]

For a finite domain $A\subset\RR^D$, with $|A|=\int_Ad\mu$, we define the normalized $L_1$-norm
\[
\|f\|_{L_1}=\frac1{|A|}\int_A|f|\,d\mu\,.
\]
\section{Construction of algorithm and general analysis}
\subsection{KMS-symmetric jump operators}
We start from the following known characterization of KMS-detailed balance Lindbladians, see \cite{FAGNOLA_2007} for an earlier work giving an equivalent characterization.
\begin{theorem}[{\cite[Theorem 10]{Ding_2025}}]\label{thm:KMSDBCharacter}
Let $\sigma>0$ be a state. The $\sigma$-KMS-DB Lindbladians with respect to $\sigma$ are given uniquely as
\[
\cL^\dagger(X)=i[G,X]+\sum_{j
\in J}\left(L^\dagger_j XL_j-\frac12\{L_j^\dagger L_j,X\}\right)\,,
\]
where
\begin{equation}\label{eq:KMScondition}
\Delta_\sigma^{-1/2}L_j=\sigma^{-1/2} L_j \sigma^{1/2}=L_j^\dagger
\end{equation}
and
\[
G=i\tanh(\log(\Delta_\sigma^{1/4}))\left(-\frac12\sum_aL_a^\dagger L_a\right)\,.
\]
\end{theorem}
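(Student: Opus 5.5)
We work in the GNS-type picture: set $\Gamma_\sigma(X)=\sigma^{1/4}X\sigma^{1/4}$ and the modular operator $\Delta_\sigma(X)=\sigma X\sigma^{-1}$. The KMS-detailed-balance condition, self-adjointness of $\cL^\dagger$ for $\langle\cdot,\cdot\rangle_{\sigma,KMS}$, is then equivalent to the superoperator $\mathcal K:=\Gamma_\sigma\circ\cL^\dagger\circ\Gamma_\sigma^{-1}$ being self-adjoint for the Hilbert–Schmidt product. So the plan is to parametrize all Lindbladians and impose Hermiticity of $\mathcal K$.

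\textbf{Sufficiency.} Assume \eqref{eq:KMScondition} and the stated formula for $G$. First check that each dissipative piece $X\mapsto L_j^\dagger X L_j$ is KMS-symmetric. Using $L_j^\dagger=\sigma^{-1/2}L_j\sigma^{1/2}$, we get
\[
\Tr[Y^\dagger\sigma^{1/2}L_j^\dagger XL_j\sigma^{1/2}]=\Tr[Y^\dagger L_j\sigma^{1/2}XL_j^\dagger\sigma^{1/2}]\cdot(\text{conjugations}),
\]
and this rearranges directly to $\langle L_j^\dagger YL_j,X\rangle_{\sigma,KMS}$. The remaining part has the form $X\mapsto -(KX+XK^\dagger)$ with $K=\tfrac12\sum_j L_j^\dagger L_j+iG$. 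It is KMS-symmetric iff
\[
\sigma^{1/2}K\sigma^{-1/2}=K^\dagger .
\]
Equivalently, conjugating by $\sigma^{1/4}$, the operator $\tilde K=\sigma^{1/4}K\sigma^{-1/4}$ must be Hermitian. Write $M=\tfrac12\sum_jL_j^\dagger L_j$, which is Hermitian. In the eigenbasis of $\sigma$ with eigenvalues $s_k$, the condition reads
\[
(s_k/s_l)^{1/4}(M+iG)_{kl}=\overline{(s_l/s_k)^{1/4}(M+iG)_{lk}} .
\]
This yields the linear equation $(s_k/s_l)^{1/4}(M_{kl}+iG_{kl})=(s_k/s_l)^{-1/4}(M_{kl}-iG_{kl})$, whose solution is
\[
iG_{kl}=-\tanh\!\big(\tfrac14\log(s_k/s_l)\big)M_{kl}.
\]
This is exactly the stated $G=i\tanh(\log\Delta_\sigma^{1/4})(-M)$, once we note that $\Delta_\sigma^{1/4}$ acts entrywise as $(s_k/s_l)^{1/4}$. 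It remains to check that $G$ is Hermitian. This holds because $\tanh$ is odd and $M$ is Hermitian.

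\textbf{Necessity and uniqueness.} Start from an arbitrary Lindbladian. We may choose a canonical form in which the $L_j$ are traceless and the representation is fixed up to unitary mixing of jumps. Decompose $\cL^\dagger=\Phi^*+(\text{$K$-part})$, where $\Phi^*(X)=\sum_jL_j^\dagger XL_j$. The key step is to show that KMS-symmetry of $\cL^\dagger$ forces KMS-symmetry of $\Phi^*$ and of the $K$-part separately. The plan for this is to compare Choi/Kraus structures: the completely positive part of a Lindbladian modulo the identity channel is well defined once the traceless normalization is imposed. Given that $\Phi^*$ is KMS-symmetric, the $\Gamma$-conjugated Kraus operators $\sigma^{1/4}L_j^\dagger\sigma^{-1/4}$ span the same space as their adjoints. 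Diagonalizing the resulting Hermitian Gram/Choi matrix then allows a unitary rotation of the jump family so that each $\sigma^{1/4}L_j\sigma^{-1/4}$ becomes Hermitian, which is \eqref{eq:KMScondition}. Once this is in place, $G$ is forced by the linear equation solved above, which gives uniqueness modulo adding multiples of the identity to $G$.

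\textbf{Main obstacle.} The hardest step is the separation argument in the necessity direction: showing that the CP part and the anticommutator part are individually KMS-symmetric, and handling the non-uniqueness of the Lindblad representation. The gauge freedom $L_j\to L_j+c_j\id$, $G\to G+\text{correction}$ must be fixed by tracelessness before the Choi matrix argument applies. I would handle this through the Choi matrix of $\Gamma_\sigma\Phi^*\Gamma_\sigma^{-1}$ restricted to the orthogonal complement of $\id$.
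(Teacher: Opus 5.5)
The paper gives no proof of this statement. It imports it from Ding et al.\ and points to Fagnola--Umanit\`a for an equivalent characterization, in which $L\mapsto\sigma^{1/2}L^\dagger\sigma^{-1/2}$ acts on a special family of jumps as a symmetric unitary mixing. Your route is different and self-contained: conjugating by $\Gamma_\sigma$ turns KMS-symmetry into Hilbert--Schmidt symmetry, and you separate the dissipative and anticommutator parts via realigned Choi operators.

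The separation you flag as the main obstacle does close with your gauge, and you should write out why. Put $B_j=\sigma^{1/4}L_j^\dagger\sigma^{-1/4}$. Then $\Tr B_j=\overline{\Tr L_j}=0$, so both $C=\sum_j|B_j\rangle\!\rangle\langle\!\langle B_j|$ and $\sum_j|B_j^\dagger\rangle\!\rangle\langle\!\langle B_j^\dagger|$ are supported on $\id^\perp$. The $K$-part realigns to $-(|\tilde K\rangle\!\rangle\langle\!\langle\id|+|\id\rangle\!\rangle\langle\!\langle\tilde K|)$, whose compression to $\id^\perp$ vanishes. Compressing the symmetry identity therefore gives symmetry of the CP part, and the remainder forces $\tilde K-\tilde K^\dagger\in\CC\,\id$. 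This buys you an elementary finite-dimensional proof without the Fagnola--Umanit\`a machinery, at the price of handling the gauge and the choice of Kraus operators by hand.

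As written, though, there are concrete errors.

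\textbf{The sign of $K$.} Since $\cL^\dagger(X)=\Phi^*(X)+i[G,X]-\{M,X\}$, writing the remaining part as $X\mapsto-(KX+XK^\dagger)$ requires $K=M-iG$. Your choice $K=M+iG$ produces $-i[G,X]$. Consequently your solution $iG_{kl}=-\tanh(\tfrac14\log(s_k/s_l))M_{kl}$ is the \emph{negative} of the stated $G$, so the claimed identification is false as written. With $K=M-iG$ the same computation gives $iG_{kl}=\tanh(\tfrac14\log(s_k/s_l))M_{kl}$, which is exactly the statement.

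\textbf{The exponents in the Hermiticity condition.} Hermiticity of $\sigma^{1/4}L_j\sigma^{-1/4}$ is equivalent to $\sigma^{1/2}L_j\sigma^{-1/2}=L_j^\dagger$, i.e.\ \eqref{eq:KMScondition} with $\sigma$ replaced by $\sigma^{-1}$. What your $\Gamma_\sigma$-picture actually yields, and what you need, is Hermiticity of $B_j$, equivalently of $\sigma^{-1/4}L_j\sigma^{1/4}$.

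\textbf{Getting Hermitian jumps.} ``Diagonalizing the Gram/Choi matrix'' does not by itself produce Hermitian jumps. Eigenvectors of $C$ need not be Hermitian, and equality of spans is too weak. The missing point is that symmetry of the CP part is precisely $C=JCJ$ for the antiunitary involution $J:X\mapsto X^\dagger$ on Hilbert--Schmidt space. Hence each eigenspace of $C$ is $J$-invariant and has an orthonormal basis of Hermitian operators $H_k$. The Hermitian operators $\sqrt{\lambda_k}H_k$ are then Kraus operators of the same map. Equivalently: $B_j^\dagger=\sum_k u_{jk}B_k$ with $u$ unitary, linear independence forces $u=u^T$, and a Takagi factorization $u=WW^T$ rotates it away.

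\textbf{A slip in the sufficiency check.} Your intermediate trace should read $\Tr[Y^\dagger L_j\sigma^{1/2}X\sigma^{1/2}L_j^\dagger]$, obtained by using $\sigma^{1/2}L_j^\dagger=L_j\sigma^{1/2}$ twice.
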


\subsection{Energy-domain representation}
While the above fully characterizes KMS-detailed balance Lindbladians, it is crucial to consider efficient implementations.
To that end, we rewrite the Lindblad operators as a product of an operator with an efficient time-domain implementation and a self-adjoint "jump-proposal".\footnote{This is in analogy with classical MCMC methods, where a self-adjoint spin-flip is proposed and its acceptance probability with energy-dependent weight ensures detailed balance.}
Following the approach of~\cite{Ding_2025}, we define the Lindbladian terms using the following simple observation:
For any self-adjoint operator $A$, the operator $L=\sigma^{1/4}A\sigma^{-1/4}$ fulfills the condition Eq.~\eqref{eq:KMScondition}.

\begin{definition}\label{def:jumps}
We define the jump operators
\begin{equation}\label{eq:defLa}
L_a=\sum_{k,l}\hat g(E_k,E_l)P_kA_aP_l\,,
\end{equation}
with
\[
\hat g(E_1,E_2) =\sqrt[4]{\frac{f(E_1)}{f(E_2)}}\kappa(E_1,E_2)
\]
where $A_a$ are self-adjoint jump proposals taken from a set indexed by $a\in\mathfrak{A}$ and $\kappa: [-S,S]^2\to\CC$ satisfies $\kappa(E_k,E_l)=\overline{\kappa(E_l,E_k)}$.

For the coherent term we write in an energy decomposition
\[
G=i \sum_{k,l}\tanh(\log(\frac{f(E_k)}{f(E_l)})/4)P_k\left(-\frac12\sum_aL_a^\dagger L_a\right)P_l.
\]
\end{definition}
\begin{remark}
The main difference to previous construction is that the function $\hat g(E_1,E_2)$ depends on $E_1$, $E_2$ individually rather than only on their difference $E_1-E_2$, an implementation approach noted in \cite{chen2023quantumthermalstatepreparation}.
In fact, if $\hat g(E_1,E_2)=h(E_1-E_2)$, detailed balance respect to a fixed point $f(H)$ implies that $f$ is an exponential function under some mild assumptions.

Specifically, this is the case if detailed balance holds for the transition part (common to all the constructions in the literature), the construction works for arbitrary $H$ (with arbitrary eigenvalues) and $A_a$ and assuming that $f$ is continuous. To see this, observe that the detailed balance condition applied to observables $P_k$, $P_l$, and the jump proposal $A_a=\sum_{k,l}\ketbra{\psi_k}{\psi_l}$ 
\[
|h(E_k-E_l)|^2 f(E_l)=\langle L_a^\dagger P_k L_a,P_l\rangle_{\sigma,KMS}=\langle P_k,L_a^\dagger P_l L_a \rangle_{\sigma,KMS}=|h(E_l-E_k)|^2 f(E_k),
\]
which reduces to the functional equation of the exponential.
\end{remark}

In the above definition of $L_a$, the $\kappa(E_k,E_l)$ is a filter function, while the ratio of $f$ at different energies ensures detailed balance. In fact, the filter function corresponds to replacing the self-adjoint jump proposal $A_a$ by another self-adjoint operator:
\[
L_a=\sum_{k,l}\sqrt[4]{\frac{f(E_k)}{f(E_l)}}P_k \tilde A_aP_l\,,
\]
where
\[
\tilde A_a=\sum_{k,l}\kappa(E_k,E_l) P_k A_aP_l=\left(\sum_{k,l}\overline{\kappa(E_k,E_l)} P_lA_aP_k\right)^\dagger=\left(\sum_{k,l}\kappa(E_k,E_l)P_lA_aP_l\right)^\dagger=\tilde A_a^\dagger\,.
\]
Therefore, the jump operators $L_a$ fulfill the conditions of Theorem~\ref{thm:KMSDBCharacter} for the state $\sigma=f(H)/\Tr[f(H)]$.

\subsection{Choice of filter function}
We would like to implement the Lindbladian above efficiently, i.e., a unit-time evolution with the Lindbladian should be possible with a polynomial-sized quantum circuit. Key to this is choosing the filter function in such a way that the $G$ and $L_a$ terms can be efficiently block-encoded. This turns out to be non-trivial, as a poor choice of $\kappa$ can immediately lead to a very large mixing time. Below we motivate our choice of filter functions and discuss how this affects the running-time.

To achieve the efficiency, the conditions on $\kappa$ or rather more directly on $\hat g$ are explained in the following.
Simply speaking, differentiability results in a faster runtime. In particular, a polynomial runtime can be ensured if $\hat g$ is twice differentiable and uniformly bounded.
In the rest of the paper we assume the following filter function ensuring these conditions, see Lemma~\ref{lem:filter}. 
Let $f(x)=\exp(\Phi(x))$. Let $\Phi\le0$ be $L$-Lipschitz. Let
\begin{align*}
\kappa(E_k,E_l)&=\nu_{L^2S^2/4,S}(E_k-E_l)\\
\nu_{C,\zeta}&=\exp(-\sqrt{1+C(1-\cos(x\pi/\zeta))})\,.
\end{align*}

The consideration of good mixing times is more subtle. We do not address bounds on the mixing time in this manuscript, but remark the following regarding the motivation for the choice of filter function.
For constant $f$ without a filter function and for $A_a$ Pauli jumps, the Lindbladian reduces to a product of depolarizing channels and mixes rapidly.
$L_a$ is defined as an elementwise product of $\hat g$ and $A_a$, see Eq. \eqref{eq:defLa}. Let us assume the self-adjoint jump-proposal $A_a$ is a local operator.
In the case of local Hamiltonians, the matrix elements of $A_a$ are concentrated around the diagonal due to the following bound~\cite{Arad_2016}
\[
\|\Pi_{[\eps',\infty]}A\Pi_{[-\infty,\eps]}\|\le \|A\|\exp(\lambda(\eps'-\eps)-2R)
\]
where $\Pi$ are projectors onto the energy eigenspaces within the respective intervals, and $\lambda$ and $R$ are universal constants only dependent on the locality and strength of the interactions and the size of the support of $A$ but not on the system size or $\eps$, $\eps'$.

This means that elements far from the diagonal (large $|E_k-E_l|$), which are most suppressed by the filter function $\kappa(E_k,E_l)$, the elements of $A_a$ were small in the first place.
Close to the diagonal on the other hand, for controlled derivatives of $f$, the matrix elements of $\kappa$ between nearby energies remain far from zero ensuring good connectivity, by not suppressing the large elements of $A_a$.

On the other hand $\hat g$ needs to be uniformly bounded to ensure efficiency. This is because its $L_1$ norm contributes to the implementation cost. This is ensured due to the decay of the filter function based on the Lipschitz constant of $\Phi$, see Lemma~\ref{lem:filter}.

\subsection{Time-domain} Note that the above decompositions are not aligned with their implementations with sums ranging over individual eigenvalues and projectors with perfect energy-resolution.
The solution to this, as it has been proposed in \cite{chen2023quantumthermalstatepreparation,chen2023efficientexactnoncommutativequantum,Ding_2025}, is to consider the jump operators in the time-domain. We rewrite the sum in terms of its Fourier decomposition which takes the form of a sum over discretized time steps of time-evolved jump proposals. By truncating these sums and thereby bounding the overall Hamiltonian evolution time, we obtain efficient implementations (block-encodings) of $L_a$ and $G$:

Since we need to only obtain the correct values of $\hat g$ and $\hat w$ for $E\in[-\|H\|,\|H\|]\subset [-S,S]:=I$, where $S\ge\|H\|$ is a constant, we can actually use Fourier series instead of transforms.
This deviates from the formalism in prior works. \cite{chen2023quantumthermalstatepreparation,chen2023efficientexactnoncommutativequantum} instead bound the discretization error of the Fouriertransform, and in \cite{Ding_2025} the vanishing discretization error is indirectly recovered from the Plancherel formula by introducing compactly supported filter functions.
The  assumption of a bounded interval as the domain for $f$ contains these technicalities in the choice of state function hence simplifying the implementation.

Representing the function $\hat g$ via its Fourier series we obtain
\begin{align*}
L_a&:=\sum_{k} \hat g(E_k,E_l)P_k A_a P_l\\
&=\sum_{(n_1,n_2)\in\mathbb Z^2}g_{n_1,n_2}e^{in_1\tau H}A_ae^{in_2\tau H}
\end{align*}
where we determine the coefficients from the Fourier series formula
\[
g_{n_1,n_2}=\frac{1}{2S}\int_{I} e^{-i(n_1 E_1+n_2 E_2)\tau}\hat g(E_1,E_2)\,d(E_1,E_2)\,.
\]
And similarly for the coherent term with $\hat w(E_1,E_2)=i\tanh(\log(f(E_k)/f(E_l))/4)$
\begin{align*}
G&=\sum_{k,l}i\tanh(\log(\frac{f(E_k)}{f(E_l)})/4)P_k V P_l\\
&=\sum_{n_1,n_2}w_{n_1,n_2} e^{i\tau Hn_1}Ve^{i\tau Hn_2}\,,
\end{align*}
where $V=-\frac12\sum_aL_a^\dagger L_a$ and the equivalent
\[
w_{n_1,n_2}=\frac{1}{4S^2}\int_{I} e^{-i(n_1E_1+n_2E_2)\tau}\hat w(E_1,E_2)\,d(E_1,E_2)\,.
\]
We note that these representations can be implemented using Hamiltonian simulation techniques after truncation to finite sums.
The error of such truncations is what we discuss next.

\subsection{Truncation and implementation}
In order to obtain decay results on the Fourier coefficients, we impose certain assumptions on the function $f$.
In particular, bounds on the derivatives of $f$ will dictate the complexity of the algorithm as the degree of differentiability implies the rate of decay of Fourier coefficients.
For functions that do not satisfy these assumption in the first place such as the exact window functions defining microcanonical ensembles we need to choose suitable differentiable approximations as we will discuss in the following section. 
However, we deviate from the approach in ~\cite{Ding_2025} in that we work with $k$-times differentiable rather than Gevrey-class functions.
While it results in a slightly suboptimal dependence of our scheme in the implementation error, we prefer this formulation since it relaxes the assumptions on the defining functions $f$, resulting in more straightforwardly checkable conditions.

Let $k\ge2$. We define the state weights as
\[
f(E)=e^{\Phi(E)}
\]
where $\Phi:\RR\mapsto\RR^-_0$ is $k$-times differentiable, periodic and $L$-Lipschitz.

Note that modifying the function $\Phi$ outside the interval $[-\|H\|,\|H\|]$ does not change the target distribution, which provides additional freedom that can be used to ensure the periodicity.
For instance, in the case of thermal states, we define the $2S$-periodic function $\Phi(E)=-\beta (E)$ for $E\in[-\|H\|,\|H\|]$ and otherwise define the function via a k-times differentiable polynomial interpolation on $[\|H\|,2S-\|H\|]$ matching the derivatives at the boundaries.                                    

We aim to implement
\[
\overline{L}_a:=\sum_{n_1,n_2\in\mathbb Z,|n_1|,|n_2|\le M}g_{n_1,n_2} e^{i\tau Hn_1}A_ae^{i\tau Hn_2}
\]
which can be done given conditional Hamiltonian evolution
\[
U_H=\sum_{m=-M}^M \ketbra{m}{m}\otimes e^{im\tau H}
\]
for evolution times up to $M\tau$.\footnote{We assume a perfect implementation of the Hamiltonian evolution, but bound the evolution time by a polynomial. Replacing these evolutions with their implementations only incurs a polynomial overhead in the system size and interaction strength and sublogarithmic in the desired error~\cite{Low_2017}.}

Using Lemma~\ref{lem:fourierTailBounds}, which states standard tail bounds on Fourier coefficients of $k$-times differentiable functions, we bound
\begin{align*}
\|L_a-\overline{L}_a\|&\le\sum_{(n_1,n_2)\in\ZZ^2\setminus[-M,M]^2}|g_{n_1,n_2}|\\
&\le \frac{S^k}{\pi^k}\left(\frac{8S^k}{\pi^k}\left\|\partial^{(k,k)}\hat g\right\|_{L^1}+\left\|\partial^k_1\hat g\right\|_{L^1}+\left\|\partial^k_2\hat g\right\|_{L^1}\right)\frac{2M^{1-k}}{k-1}
\end{align*}

For $G$ we define
\[
\overline{G}=-\frac12\sum_{(n_1,n_2)\in\ZZ^2\setminus[-M',M']^2}w_{n_1,n_2}e^{i\tau H n_1}\sum_{a\in\mathfrak{A}} \overline{L}_a^\dagger \overline L_ae^{i\tau H n_2}
\]
then, again applying Lemma~\ref{lem:fourierTailBounds}
\begin{align}\label{eq:Gerr}
\left\|G-\overline{G}\right\|&\le \left\|G-\sum_{(n_1,n_2)\in[-M',M']^2}w_{n_1,n_2}e^{i\tau H n_1}Ve^{i\tau H n_2}\right\|\nonumber\\
&\phantom{==}+\sum_{n_1,n_2\in[-M',M']^2}|w_{n_1,n_2}|\left\|\frac12\sum_{a\in A}\left(L_a^\dagger L_a-\overline{L}_a^\dagger \overline{L}_a\right)\right\|\nonumber\\
&\le\|V\|\left(\frac{8S^k}{\pi^k}\left\|\partial^{(k,k)} \hat w\right\|_{L^1}+\left\|\partial_1^k\hat w\right\|_{L^1}+\left\|\partial_2^k\hat w\right\|_{L^1}\right)\frac{2S^kM'^{1-k}}{\pi^k(k-1)}\nonumber\\
&\phantom{==}+\left(9\|\hat w\|_{L_1}+\left(\frac{8S^k}{\pi^k}\left\|\partial^{(k,k)} \hat w\right\|_{L^1}+\left\|\partial_1^k\hat w\right\|_{L^1}+\left\|\partial_2^k\hat w\right\|_{L^1}\right)\frac{2S^k}{\pi^k(k-1)}\right)\nonumber\\
&\phantom{====}\times|\mathfrak A|\frac12\left(\left\|L_a\right\|+\left\|\overline{L_a}\right\|\right)\left\|L_a-\overline{L_a}\right\|\,,
\end{align}
where
\begin{align*}
    \|L_a\|&=\left\|\sum_{(n_1,n_2)\in\mathbb Z^2}g_{n_1,n_2}e^{in_1\tau H}A_ae^{in_2\tau H}\right\|\le\|g\|_{l^1}\\
    \|V\|&\le\frac{|\mathfrak A|}2 \max_a\|L_a\|^2\le\frac{|\mathfrak A|\|g\|_{l_1}^2}{2}\\
    \|\hat w\|_{L_1}&\le \sup_{E_1,E_2\in[-S,S]} |\hat w(E_1,E_2)|\le1\,.
\end{align*}
Then, assuming the truncation is chosen such that $\|L_a-\overline{L}_a\|\le1$
\begin{align*}
    \|\overline{L}_a\|\le\|L_a\|+\|L_a-\overline{L}_a\|\le 1+\|g\|_{l^1}
\end{align*}

Now let us consider the implementations and scalings of a block encoding of $L_a$ and $G$ starting with $G$.

We now require that the error bound in Equation~\eqref{eq:Gerr} is bounded by $\eps$. To bound the first term by $\eps/2$ it suffices to choose an $M'$ with
\[
M'=\cO\left(\sqrt[k-1]{\frac{|\mathfrak A|\|g\|_{l_1}^2\left(S^k\left\|\partial^{(k,k)} \hat w\right\|_{L^1}+\left\|\partial_1^k\hat w\right\|_{L^1}+\left\|\partial_2^k\hat w\right\|_{L^1}\right)S^k}{\eps}}\right)
\]
Bounding the second term by $\eps/2$ is in fact fulfilled if
\begin{equation}\label{eq:GrequiredBoundL}
\|L_a-\overline{L}_a\|\le o\left(\frac{\eps}{|\mathfrak A|\left(1+\left(S^k\|\partial^{(k,k)}\hat w\|_{L_1}+\|\partial^k_1\hat w\|_{L_1}+\|\partial_2^k \hat w\|_{L_1}\right)S^k\|g\|_{l^1}\right)}\right),
\end{equation}
which we ensure by choosing a sufficiently large $M$ of
\begin{align*}
M&=\mathcal O\Big(\sqrt[k-1]{1/\eps}\sqrt[k-1]{|\mathfrak A|\left(1+\left(S^k\|\partial^{(k,k)}\hat w\|_{L_1}+\|\partial^k_1\hat w\|_{L_1}+\|\partial_2^k \hat w\|_{L_1}\right)S^k\|g\|_{l^1}\right)} \\
&\quad\times\sqrt[k-1]{\left(S^k\|\partial^{(k,k)}\hat g\|_{L_1}+\|\partial^k_1\hat g\|_{L_1}+\|\partial_2^k \hat g\|_{L_1}\right)S^k}\Big)
\end{align*}
This is already stronger than our requirement on the implementation error of $L_a$ directly.

For the block encoding we need gates that encode the Fourier coefficients, i.e., they map $\ket0$ to
\[
\Prep_w\ket0=\frac1{\sqrt{Z_w}}\sum_{(n_1,n_2)\in[-M',M']^2}\sqrt{w_{n_1,n_2}}\ket{n_1,n_2}
\]
for $G$ with (see Lemma~\ref{lem:fourierL1Bounds})
\[
Z_w=\|w\|_{l^1([-M,M]^2)}\le20\left(1+\left(\frac{8S^k}{\pi^k}\left\|\partial^{(k,k)}\hat w\right\|_{L^1}+\left\|\partial^k_1\hat w\right\|_{L^1}+\left\|\partial^k_2\hat w\right\|_{L^1}\right)\frac{2S^k}{\pi^k}\right)^{\frac2{k+1}}
\]
and for $L_a$ we need 
\[
\Prep_g\ket{0}=\frac{1}{\sqrt{Z_g}}\sum_{(n_1,n_2)\in[-M,M]^2}\sqrt{g_{n_1,n_2}} \ket{n_1,n_2}\,.
\]
with
\[
Z_g=\|g\|_{l^1([-M,M]^2)}\le \|g\|_{l^1}\,.
\]
This $l^1$-norm used here and above can be bounded as (see Lemma~\ref{lem:fourierL1Bounds})
\[
\|g\|_{l^1}\le 10(1+\|\hat g\|_{L^1})\left(1+\left(\frac{8S^k}{\pi^k}\left\|\partial^{(k,k)}\hat g\right\|_{L^1}+\left\|\partial_1^k\hat g\right\|_{L^1}+\left\|\partial_2^k\hat g\right\|_{L^1}\right)\frac{S^k}{\pi^k}\right)^{\frac2{k+1}}\,.
\]

We now have all the ingredients and error bounds to construct a block encoding of $G$ and $L_a$ respectively.

The components are for $L_a$, see Fig~\ref{fig:circ_jump},
\begin{itemize}
    \item $\Prep_g$: For classically precomputed coefficients this can in general be implemented in polynomial time in $M$ (as any unitary on $\log(M)$ qubits). This already results in polynomial-time guarantees for our results, but prevents us from giving explicit degrees. However, we typically expect that a coherent implementation of numerical integration algorithms results in a $\polylog(M)$ runtime for this part. This is the case if the function is piecewise analytic (as all functions considered henceforth) and under some mild assumptions on the locations of their complex zeros \cite{trefethen2008}. We omit a detailed discussion of such bounds since they are purely classical numerical analysis problems and denote the maximum cost of the prep gate of $g$ and $w$ by $CPrep(M)$ in the following.

    \item Controlled Hamiltonian evolution for time up to $T=M\tau$, a well-studied problem, see for example \cite{Low_2017}.
    \item The block-encoding  $U_{\mathfrak{A}}$.
    \item $\Prep_g^\dagger$.
\end{itemize}

\begin{figure}[H]
    \centering
    \begin{quantikz}[wire types={b,b,b,b},thin lines]
\lstick{$\ket{0}$}  & \gate[2]{ \text{Prep}_g } &   &&\ctrl{3}&\gate[2]{\Prep_g^\dagger}&\\
\lstick{$\ket{0}$}  &  & \ctrl{2}  &&&&\\
\lstick{$\ket{a}$}  &              &                  & \gate[2]{U_{\mathfrak{A}}} &&&\\
\lstick{$\rho$}     &                  & \gate{e^{iH\tau n_1}}          &&\gate{e^{iH\tau n_2}}& &
\end{quantikz}
    \caption{Circuit for block encoding of the jump operators. Not depicted are additional ancillas that may but need not be necessary for implementing $U_{\kA}$. It might seem unconventional that both Hamiltonian evolutions come with a positive sign. This is purely a convention of the Fourier transform, but we want to emphasize that the two evolution times are independent and do \emph{not} form a Heisenberg time-evolution of the operator as in the special case of the Gibbs samplers.}
    \label{fig:circ_jump}
\end{figure}
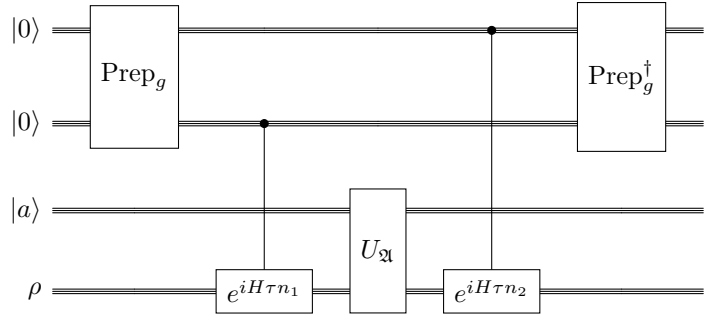

For $G$ the components are similar: $\Prep_w$, conditional Hamiltonian evolution up to time $T'=M'\tau$, two of the above encodings of $L_a$ sharing \textit{the same} index register for $A_a$, another Hamiltonian evolution, and $\Prep_w^\dagger$, see Fig~\ref{fig:coherent}.
We obtain $(\alpha_{L/G},a_{L/G},\eps)$-block encodings of $L_a$ and $G$ with scaling factor
\[
\alpha_L=Z_g,\quad\alpha_G=Z_g^2Z_w\,,
\]
and ancilla register size (including the size of the index register for $\mathfrak{A}$)
\[
a_L=2\log(M)+\log|\mathfrak{A}|,\quad a_G=\log(M')+2\log(M)+\log(|\mathfrak{A}|)
\]
with the asymptotic bounds on $M, M'$ given above, assuming an ancilla free block encoding of $U_\mathfrak{A}$, and $\eps$ which can be freely chosen.
\begin{figure}[h!]
    \centering
    \begin{quantikz}[wire types={b,b,b,b,b,b},thin lines]
\lstick{$\ket{0}$}  & \gate[2]{ \text{Prep}_w } & \ctrl{4}  &&&&\gate[2]{\Prep_w^\dagger}&\\
\lstick{$\ket{0}$}  &  &   &&&\ctrl{3}&&\\
\lstick{$\ket{0}$}  &  & & \gate[3]{U_{L_a}}&&&&\\
\lstick{$\ket{a}$}  &              &                  & &\gate[3]{U_{L_a^\dagger}}&&&\\
\lstick{$\rho$}     &                  & \gate{e^{iHt}}          & &&\gate{e^{iHt}}&&\\
\lstick{$\ket{0}$}  &  &   &&&&&
\end{quantikz}
    \caption{Circuit for block encoding of the coherent term. The gates $U_{L_a}$ are shorthand for the entire circuit in Figure~\ref{fig:circ_jump}. }
    \label{fig:coherent}
\end{figure}
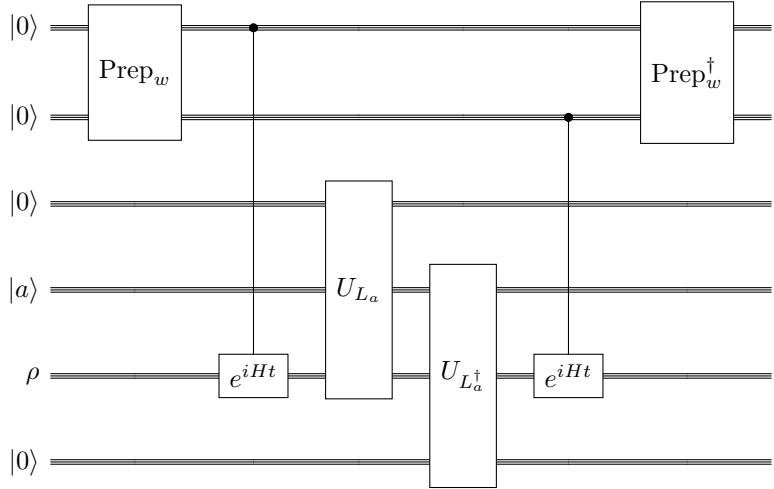
\begin{remark}
    One may note the similarity of the norms in the above analysis to Sobolev norms, which are sums over $L_p$-norms with fixed $p$ of derivatives of a function up to a finite order. Nevertheless, we refrain from introducing the concept since Sobolev spaces consider weak derivatives, which would introduce additional difficulties in working with a \emph{point spectrum} as the one of $H$ in finite-dimensional systems. Under the assumption of a pointwise $k$-times differentiable function, however, the norm bounds would be subsumed by $W^{k,1}$ Sobolev norm bounds.
\end{remark}

\subsection{Complexity via block encoding}
With the block encodings from the previous sections, in light of the following theorem, an efficient implementation of the detailed balance Lindbladian becomes an immediate consequence.
\begin{theorem}{\cite[Theorem 11]{li_simulating_2023}}\label{thm:liSim}
Suppose we are given an $(\alpha_0,a,\eps')$-block-encoding $U_H$ of $H$ and an $(\alpha_L,a,\eps')$-block-encoding $U_{L_j}$ for each $L_j$. Let $\|\cL\|_{be}=\alpha_0+|\mathcal A|\alpha_L^2/2$. For all $t,\eps\ge0$ with $\eps'\le\eps/(t\|\cL\|_{be})$, there exists a quantum algorithm for simulating $e^{\cL t}$  to error $\eps$ in diamond norm using
\[
\cO\left(t\|\cL\|_{be}\frac{\log(t\|\cL\|_{be}/\eps)}{\log(\log(t\|\cL\|_{be}/\eps))}\right)
\]
queries to $U_H$ and $U_{L_j}$ and
\[
\cO\left(t|\mathfrak{A}|\|\cL\|_{be}\left(\frac{\log(t\|\cL\|_{be}/\eps)}{\log(\log(t\|\cL\|_{be}/\eps))}\right)^2\right)
\]
additional 1- and 2-qubit gates.
\end{theorem}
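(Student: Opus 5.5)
Since Theorem~\ref{thm:liSim} is quoted from \cite{li_simulating_2023}, I would reprove it along the lines of the Dyson-series approach to Lindbladian simulation, as in \cite{cleve2019,li_simulating_2023}. Here $U_H$ plays the role of the coherent part, which in our application is $G$.

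\textbf{Step 1: split the generator and expand.} I would write $\cL=\cL_0+\cL_J$, where $\cL_0(\rho)=J\rho+\rho J^\dagger$ with $J=-iH-\frac12\sum_j L_j^\dagger L_j$, and $\cL_J(\rho)=\sum_j L_j\rho L_j^\dagger$. The semigroup $e^{s\cL_0}$ acts as $\rho\mapsto e^{sJ}\rho e^{sJ^\dagger}$. I would then expand $e^{\delta\cL}$ in a Dyson series in $\cL_J$ on a short segment of length $\delta=\Theta(1/\|\cL\|_{be})$ and truncate at order $K$. The $m$-th order term is a time-ordered integral whose Kraus operators have the form
\[
e^{(\delta-s_m)J}L_{j_m}e^{(s_m-s_{m-1})J}\cdots L_{j_1}e^{s_1J}.
\]
Since $\|\cL_J\|$ and $\|J\|$ are bounded by $\cO(\|\cL\|_{be})$, the truncation error is at most $(\delta\|\cL\|_{be})^{K}/K!$. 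Choosing $K=\cO(\log(t\|\cL\|_{be}/\eps)/\log\log(t\|\cL\|_{be}/\eps))$ keeps the total error over $t/\delta$ segments below $\eps/2$.

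\textbf{Step 2: block-encode the Kraus operators.} I would discretize the time integrals by Gaussian quadrature with $\poly(K)$ nodes; the integrands are entire in the times, so this error is exponentially small. Each factor $e^{sJ}$ would be replaced by a truncated Taylor series in $J$. The operator $J$ itself is a linear combination of $U_H$ and of products $U_{L_j}^\dagger U_{L_j}$ sharing one index register, with normalization $\alpha_0+|\mathcal A|\alpha_L^2/2=\|\cL\|_{be}$. A linear combination of unitaries (LCU) over quadrature nodes, orders, Taylor terms and jump indices then gives a single unitary $W$. Applied to $\ket{0}\otimes\rho$ and followed by tracing out the ancillas, $W$ yields a state whose "good" branch is proportional to the truncated channel $\tilde{\mathcal E}_\delta(\rho)$. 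The query count per segment is $\cO(K)$, and the $\cO(|\mathfrak A|K^2)$ gate count comes from preparing the index and coefficient states. Each block-encoding error $\eps'$ enters linearly per query, which is where the condition $\eps'\le\eps/(t\|\cL\|_{be})$ is used.

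\textbf{Step 3: boost the success probability (main obstacle).} The good branch of $W$ has amplitude only $\Theta(1)$ because of the LCU normalization. Unlike the Hamiltonian case, the target is a channel rather than a unitary, so standard oblivious amplitude amplification does not apply directly. My first attempt would follow \cite{cleve2019}. First I would purify $\tilde{\mathcal E}_\delta$ into an approximate isometry by keeping the Kraus index register. Then I would choose $\delta$ so that the LCU success amplitude is exactly $1/2$, padding with a dummy branch if needed, and apply one round of oblivious amplitude amplification. Its correctness needs the operator $\Pi W^\dagger\Pi W\Pi$ to be close to $\frac14\Pi$ uniformly on the system. This holds up to $\cO((\delta\|\cL\|_{be})^{K})$ precisely because $\tilde{\mathcal E}_\delta$ is trace preserving up to truncation error.

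Controlling this near-isometry error, and showing that it does not compound badly across $t\|\cL\|_{be}$ segments, is the step I expect to be delicate. If it fails, the fallback is the compression scheme of \cite{li_simulating_2023}, which amplifies only once per block of several segments. Concatenating the $\cO(t\|\cL\|_{be})$ segments and summing the per-segment errors then gives the stated query and gate complexities.
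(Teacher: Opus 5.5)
Your outline follows the route of \cite{li_simulating_2023}; the paper gives no proof of its own and simply imports that theorem. The ingredients match: a Duhamel/Dyson expansion in the jump part $\cL_J$ with $e^{s\cL_0}(\rho)=e^{sJ}\rho e^{sJ^\dagger}$, Gaussian quadrature (whose positive weights keep the discretized map in Kraus form), an LCU for the stacked Kraus operators, and one round of oblivious amplitude amplification for isometries per segment of length $\delta=\Theta(1/\|\cL\|_{be})$.

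The genuine gap is in Step 2: as written, it does not give $\cO(K)$ queries per segment. An order-$m$ Kraus operator contains $m\le K$ jumps and $m+1$ exponentials $e^{\Delta_iJ}$ with $\sum_i\Delta_i=\delta$. If each exponential gets its own Taylor polynomial of the degree $\Theta(K)$ needed for accuracy $\eps/(t\|\cL\|_{be})$, the longest LCU branch is a product of $\Theta(K^2)$ block-encoded factors. You then get $\cO(t\|\cL\|_{be}K^2)$ queries, the square of the claimed $\log/\log\log$ factor.

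What is missing is a \emph{joint} truncation of the fully expanded series by total order. Keep only words with $m+\sum_ip_i\le K$, where $p_i$ is the Taylor degree used in the $i$-th exponential. The tail is controlled by the multinomial identity $\sum_{p_0+\cdots+p_m=P}\prod_i\Delta_i^{p_i}/p_i!=\delta^P/P!$. With $x=\delta\|\cL\|_{be}\le1$, $\|J\|\le\|\cL\|_{be}$, $\sum_j\|L_j\|^2\le2\|\cL\|_{be}$ and $\|e^{sJ}\|\le1$, the discarded part of the order-$m$ term has diamond norm $\cO\bigl(\frac{(2x)^m}{m!}\cdot\frac{x^{K-m+1}}{(K-m+1)!}\bigr)$. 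Summed over $m$, this is $\cO\bigl((3x)^{K+1}/(K+1)!\bigr)$. Only with this truncation is every branch a product of at most $K$ factors, each $J$ costing $\cO(1)$ queries. As a bonus, the truncated integrand is then a polynomial of degree $\cO(K)$ in the times, so nested Gauss rules with $\cO(K)$ nodes integrate it exactly.

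Your $\eps'$ bookkeeping has the same issue. Charging $\eps'$ to each of the $\cO(t\|\cL\|_{be}K)$ queries would force $\eps'\le\eps/(t\|\cL\|_{be}K)$. The stated condition instead comes from weighting by the LCU coefficients. Branches with $q$ factors carry weight $\cO(x^q/q!)$ and $\sum_q q\,x^q/q!=\cO(1)$, which gives only $\cO(\eps')$ error per segment.

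Step 3, by contrast, is not where the difficulty lies. The map implemented on each segment (one amplification round, then discarding ancillas) is a genuine channel within $\cO(\eps/(t\|\cL\|_{be}))$ of $e^{\delta\cL}$ in diamond norm. For channels, $\|\Phi_1\circ\Phi_2-\Psi_1\circ\Psi_2\|_\diamond\le\|\Phi_1-\Psi_1\|_\diamond+\|\Phi_2-\Psi_2\|_\diamond$, so per-segment errors add linearly and cannot compound. Your fallback is also misattributed. The compression scheme that amplifies once per block of many short steps is that of \cite{cleve2019}. Dispensing with it, by using a high-order expansion so that one amplification round per segment suffices, is exactly the point of \cite{li_simulating_2023}.
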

Without restating the tighter bounds in the previous section, we summarize the runtime in the following Lemma:
\begin{corollary}
For the state $\sigma=f(H)>0$, there exists a Lindbladian $\cL$ that satisfies $\sigma$-KMS detailed balance. There exists a quantum channel $\Phi$ with $\|\Phi-e^{t\cL}\|_\diamond\le\eps$ that can be implemented in circuit complexity
\begin{align*}
&t|\kA|^2\times\poly\log(t/\eps)\,\times \\
& \sqrt[k+1]{\left(\|\partial^{(k,k)}\hat w\|_{L_1}+\|\partial^k_1\hat w\|_{L_1}+\|\partial_2^k \hat w\|_{L_1}\right)^3
\left((1+\|\partial^{(k,k)}\hat g\|_{L_1}+\|\partial^k_1\hat g\|_{L_1}+\|\partial_2^k \hat g\|_{L_1})(1+\|\hat g\|_{L^1})\right)^{5}S^{7k}}
\end{align*}
times the cost of the gates $Prep_{g/w}$ plus the cost of Hamiltonian simulation for time
\begin{align*}
S\sqrt[k+1]{\frac{|\kA|S^{5k}}{\eps}\left(1+\|\partial^{(k,k)}\hat w\|_{L_1}+\|\partial^k_1\hat w\|_{L_1}+\|\partial_2^k \hat w\|_{L_1}\right)\left(1+\|\partial^{(k,k)}\hat g\|_{L_1}+\|\partial^k_1\hat g\|_{L_1}+\|\partial_2^k \hat g\|_{L_1}\right)^{\frac{k+5}{k+1}}(1+\|\hat g\|_{L^1})^2}
\end{align*}
\end{corollary}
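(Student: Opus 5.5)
This is a bookkeeping argument: the existence claim follows directly from Definition~\ref{def:jumps} and Theorem~\ref{thm:KMSDBCharacter}, and the complexity claim follows by feeding the truncation bounds and block encodings of the previous subsection into Theorem~\ref{thm:liSim}.

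\emph{Existence.} We take $L_a$ and $G$ as in Definition~\ref{def:jumps}, with the filter function $\kappa$ fixed above. We have already shown that $L_a=\sigma^{1/4}\tilde A_a\sigma^{-1/4}$ with $\tilde A_a$ self-adjoint. Hence Eq.~\eqref{eq:KMScondition} holds, and $G$ is exactly the coherent term prescribed by Theorem~\ref{thm:KMSDBCharacter}. That theorem then shows the resulting $\cL$ is $\sigma$-KMS detailed balance, which settles the first claim.

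\emph{Block encodings.} We use the circuits of Figures~\ref{fig:circ_jump} and~\ref{fig:coherent}. They give $(\alpha_L,a_L,\eps')$- and $(\alpha_G,a_G,\eps')$-block encodings with $\alpha_L=Z_g$ and $\alpha_G=Z_g^2Z_w$. The precision $\eps'=\eps/(t\|\cL\|_{be})$ is the one required by Theorem~\ref{thm:liSim}. To reach it we choose $M$ and $M'$ according to the displayed asymptotic bounds, replacing $\eps$ there by $\eps'$; this uses Eq.~\eqref{eq:Gerr} and Eq.~\eqref{eq:GrequiredBoundL}. We then apply Theorem~\ref{thm:liSim} with the block encoding of $G$ in place of $U_H$. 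This gives $\|\cL\|_{be}=Z_g^2Z_w+|\kA|Z_g^2/2=\cO(|\kA|Z_g^2Z_w)$. We then bound $Z_g\le\|g\|_{l^1}$ and $Z_w$ using Lemma~\ref{lem:fourierL1Bounds}. Each factor has the form $(1+\|\hat g\|_{L^1})\,(\text{derivative norms}\cdot S^k)^{2/(k+1)}$, and analogously for $\hat w$. Taking the product for $Z_g^2Z_w$ and absorbing constants gives the root-$(k+1)$ expression in the statement. The factor $|\kA|^2$ arises from one $|\kA|$ in $\|\cL\|_{be}$ and one from the gate count in Theorem~\ref{thm:liSim}. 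The $\poly\log(t/\eps)$ factor absorbs both the logarithmic factors of Theorem~\ref{thm:liSim} and the $\log M$ ancilla overhead.

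\emph{Hamiltonian simulation time.} Each query uses evolution time $\cO(\tau\max(M,M'))=\cO(\max(M,M')/S)$. We substitute the bounds on $M$ and $M'$ at precision $\eps'$. Here $\eps'$ contains $\|\cL\|_{be}$, which itself depends polynomially on the derivative norms through $Z_g^2Z_w$. This feedback is what turns the exponents $1/(k-1)$ into the stated mixed exponents such as $(k+5)/(k+1)$. We also replace $\|g\|_{l^1}$ everywhere by its Lemma~\ref{lem:fourierL1Bounds} bound, which generates the $(1+\|\hat g\|_{L^1})^2$ factor.

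\emph{Expected obstacle.} The conceptual steps are immediate. The main work, and the main source of possible error, is the exponent bookkeeping: tracking powers of $S^k$, of the derivative norms of $\hat g$ and $\hat w$, and of $|\kA|$ through the nested substitutions $\eps'\to M\to\|\cL\|_{be}$. Because the corollary is stated loosely, I would aim only for an upper bound of the displayed form. At each step I would bound sums by $1+\max$ and use $k\ge2$ to merge roots into a common $(k+1)$-th root. This may give exponents no tighter than those displayed, but it suffices for the claim.
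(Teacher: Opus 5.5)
Your outline is the argument the paper intends; the paper gives no proof beyond the preceding subsection. It gets existence from Theorem~\ref{thm:KMSDBCharacter} via $L_a=\sigma^{1/4}\tilde A_a\sigma^{-1/4}$, and complexity by combining $\alpha_L=Z_g$, $\alpha_G=Z_g^2Z_w$, truncations $M,M'$ chosen at precision $\eps'=\eps/(t\|\cL\|_{be})$, Lemma~\ref{lem:fourierL1Bounds} for $Z_g,Z_w$, and Theorem~\ref{thm:liSim}. Your existence part and this assembly are fine. The gap is the final step, where you assert that the bookkeeping reproduces the displayed exponents; it does not. Your closing remark that you may get exponents ``no tighter than those displayed, but it suffices'' is also backwards: a looser upper bound does not establish the displayed one.

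Concretely, the truncation estimates give $M,M'=\cO\bigl((\,\cdot\,/\eps')^{1/(k-1)}\bigr)$. So the per-query evolution time $\tau\max(M,M')$ grows like $(t\|\cL\|_{be}/\eps)^{1/(k-1)}$. Substituting $\|\cL\|_{be}=\cO(|\kA|Z_g^2(1+Z_w))$ only multiplies the norm-dependent base by further powers. This ``feedback'' cannot turn $\eps^{-1/(k-1)}$ into the displayed $\eps^{-1/(k+1)}$, nor remove the factor $t^{1/(k-1)}$, which is absent from the displayed evolution time.

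The gate-count prefactor has the same problem. Write $D_g=\|\partial^{(k,k)}\hat g\|_{L_1}+\|\partial_1^k\hat g\|_{L_1}+\|\partial_2^k\hat g\|_{L_1}$, and $D_w$ analogously. The mixed term enters Lemma~\ref{lem:fourierL1Bounds} as $S^{2k}\|\partial^{(k,k)}\hat f\|_{L^1}$, so for $S\ge1$ you get $Z_g^2Z_w=\cO\bigl((1+\|\hat g\|_{L^1})^2\,[(1+D_g)^4(1+D_w)^2S^{12k}]^{1/(k+1)}\bigr)$. The display instead has $[D_w^3((1+D_g)(1+\|\hat g\|_{L^1}))^5S^{7k}]^{1/(k+1)}$. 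Your power of $S$ is larger, and $(1+D_w)^{2/(k+1)}$ is not dominated by $D_w^{3/(k+1)}$ when $D_w$ is small. The mismatch in the power of $1+\|\hat g\|_{L^1}$ is harmless here, since $\|\hat g\|_{L^1}\le1$ by Lemma~\ref{lem:filter}.

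So your route proves a bound of the same shape but with different exponents. It gives $t|\kA|^2\polylog(t/\eps)$ times a polynomial in $S,D_g,D_w$ uses of $\Prep_{g/w}$ and of Hamiltonian simulation. The simulation time is polynomial in $S,|\kA|,D_g,D_w$ times $(t/\eps)^{1/(k-1)}$. You should state and derive that bound explicitly. The displayed exponents do not seem to follow from the paper's own intermediate estimates either, so this is a defect you inherit from the statement. Still, your proposal should not claim that the bookkeeping recovers them.
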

The degrees are not optimal (among others they incorporate poly-log dependencies). Also note that in the case of Gevrey functions $w$ and $g$, we expect that the dependence on $\eps$ becomes poly-logarithmic instead of $\sqrt[k+1]\cdot$, but the above bounds are convenient when only differentiability is known.

\section{Applications}
\subsection{Microcanonical ensembles}

We can apply the above bounds by defining a function $f=e^\Phi$ and thereby a fixed-point.
Our goal is to approximate a window function $\chi_{[b,c]}$.
Note that we do not know the normalization constant that defines the trace-normalized fixed-point, but since the space of fixed points is linear we can just work with the unnormalized version.
Therefore, we will choose $f$ such that $\|f\|_{L^\infty}=1$.

Since, the sharp window function is not differentiable, we use a $k$-times continuously differentiable interpolation $f$.
Let $\gamma:[0,1]\to[0,1]$ be a k-times continuously differentiable function such that
\begin{align*}
    \gamma(0)&=0\\
    \gamma(1)&=1\\
    \gamma^{(l)}(0)&=\gamma^{(l)}(1)=0\qquad\forall 1\le l\le k\,,
\end{align*}
which can be easily found using a polynomial ansatz and solving a linear system of equations. We define $\Phi:[-S,S]\to[-\log(\eta),0]$
\begin{align*}
    \Phi:x\mapsto \begin{cases}
        -\log(\eta)&\textrm{if}\quad x\le b-\delta \\
        -\log(\eta)\gamma((b-x)/\delta)&\textrm{if}\quad b-\delta<x\le b\\
        0&\textrm{if}\quad b<x\le c\\
        -\log(\eta)\gamma((x-c)/\delta)&\textrm{if}\quad c<x\le c+\delta\\
        -\log(\eta)&\textrm{if}\quad c+\delta<x
    \end{cases}
\end{align*}
Here we assume that $S\ge\delta+\max\{|b|,|c|\}$.
Note that $\left\|\Phi^{(l)}\right\|_{L^\infty}=\frac{-\log(\eta)}{\delta^l}\left\|\gamma^{(l)}\right\|_{L^\infty}$, where $\left\|\gamma^{(l)}\right\|_{L^\infty}$ is a universal constant (given via the solution of the polynomial interpolation). The Lipschitz constant $L$ is equal to $\|\Phi^{(1)}\|_{L^\infty}$.

For $\hat g(E_k,E_l)=e^{(\Phi(E_k)-\Phi(E_l))/4}\kappa(E_k,E_l)$, note that by the chain rule
\[
\frac{d^m}{dE_k^m}\frac{d^n}{dE_l^n}\hat g(E_k,E_l)=\hat g(E_k,E_l)\poly\left(\Phi'(E_k),\Phi'(E_l),\ldots,\Phi^{(n)}(E_k),\Phi^{(m)}(E_l),1,\ldots,\frac{1}{\kappa}\frac{d^m}{dE_k^m}\frac{d^n}{dE_l^n}\kappa(E_k,E_l)\right)
\]
where for each $m,n$, for every monomial in the polynomial the sum over orders of derivatives times their respective degree is less or equal to $m+n$ and thereby the polynomial scales as $\cO((-\log(\eta)/\delta)^{m+n})$, while $\hat g(E_k,E_l)\le1$.
Therefore, we obtain the bound
\begin{align*}
\left\|\partial_{1,2}^{(l)}\hat g\right\|_{L^1}\le\|\partial^{(l)}\hat g\|_{L^\infty}&=\mathcal O\left(\left(\frac{-\log(\eta)}{\delta}\right)^l\right)\\
\left\|\partial^{(l,l)}\hat g\right\|_{L^1}&=O\left(\left(\frac{-\log(\eta)}{\delta}\right)^{2l}\right)
\end{align*}

For
\begin{equation}\label{eq:hatwWindow}
\hat w(E_1,E_2)=i\tanh(\frac{\Phi(E_1)-\Phi(E_2)}{4})\,,
\end{equation}
the argument is similar with
\begin{align*}
\partial^{(l_1,l_2)}&\left(\tanh(\frac{\Phi(E_1)-\Phi(E_2)}{4})\right)=\poly\Big(\tanh'\left(\frac{\Phi(E_1)-\Phi(E_2)}4\right),\ldots,\\
&\tanh^{l_1+l_2}\left(\frac{\Phi(E_1)-\Phi(E_2)}4\right),\Phi'(E_1),\ldots,\Phi^{(l_1)}(E_1),\Phi'(E_2),\ldots,\Phi^{(l_2)}(E_2)\Big)
\end{align*}
Note that for each given $l_1,l_2$ the derivatives of $\tanh$ are uniformly bounded in $\RR$ and can therefore be regarded as universal constants.
Due to the constraints on the degrees of the monomials in the derivatives of $\Phi$ analogous to before, we conclude
\begin{align*}
\|\partial^l_1\hat w\|_{L^1},\|\partial^l_2\hat w\|_{L^1}&=\cO\left(\left(\frac{-\log(\eta)}\delta\right)^l\right)\\
\|\partial^{(l,l)}\hat w\|_{L^1}&=\cO\left(\left(\frac{-\log(\eta)}{\delta}\right)^{2l}\right)\,.
\end{align*}
Note that due to the uniform bound $1$ on $|\hat w|$ and $|\hat g|$ by Equaton~\eqref{eq:hatwWindow} and Lemma~\ref{lem:filter} respectively, we have $\|\hat g\|_{L^1}, \|\hat w\|_{L^1}\le1$

Plugging the above scalings into the asymptotic formulas for the previously derived block-encodings we arrive at the following result:
\begin{theorem}For the function $f$ as described above, the Lindbladian in Definition~\ref{def:jumps}
\[
\cL(\rho)=-i[G,X]+\sum_{a\in\kA}\left(L_a \rho L_a^\dagger-\frac12\{L_a^\dagger L_a,\rho\}\right)
\]
has fixed-point
\[
\sigma=f(H)/\Tr[f(H)]
\]
of $e^{t\cL}$ can be simulated to $\eps$-error in diamond norm in gate complexity
\[
\frac{t|\kA|^2}{\sqrt[k+1]\eps}\polylog(t/\eps)\poly\left(\frac{-\log(\eta)S}{\delta}\right)\,.
\]
times the cost of the gates $\Prep_{g/w}$ and the cost of Hamiltonian simulation for time
\[
\sqrt[k+1]{\frac{|\kA|}\eps}\poly\left(\frac{-\log(\eta)S}{\delta}\right)\,.
\]
\end{theorem}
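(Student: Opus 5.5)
The proof splits into two parts: the fixed-point claim, and the complexity claim obtained by substituting derivative bounds into the Corollary.

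\emph{Fixed point.} By Definition~\ref{def:jumps}, each $L_a=\sum_{k,l}\sqrt[4]{f(E_k)/f(E_l)}\,P_k\tilde A_aP_l = \sigma^{1/4}\tilde A_a\sigma^{-1/4}$, where $\tilde A_a$ is self-adjoint. Self-adjointness of $\tilde A_a$ uses the symmetry $\kappa(E_k,E_l)=\overline{\kappa(E_l,E_k)}$, which holds for the even real filter $\nu_{L^2S^2/4,S}(E_k-E_l)$. Hence $\sigma^{-1/2}L_a\sigma^{1/2}=L_a^\dagger$, which is Eq.~\eqref{eq:KMScondition}. The coherent term $G$ of Definition~\ref{def:jumps} is exactly the one prescribed by Theorem~\ref{thm:KMSDBCharacter}, since $\tanh(\log\Delta_\sigma^{1/4})$ acts on $P_kXP_l$ as multiplication by $\tanh(\log(f(E_k)/f(E_l))/4)$. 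So $\cL$ is $\sigma$-KMS-detailed balance. The standard calculation in the preliminaries then gives $\cL(\sigma)=0$. Here $f=e^\Phi\ge\eta>0$, so $\sigma$ is full rank, as required.

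\emph{Complexity.} I will apply the Corollary with the fixed differentiability order $k$ of $\gamma$. This requires bounds on $\|\hat g\|_{L^1}$, $\|\hat w\|_{L^1}$ and the mixed derivative norms of orders $k$ and $(k,k)$.

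\begin{itemize}
\item Uniform bounds: $\|\hat w\|_{L^1}\le1$ is immediate. For $\hat g$, Lemma~\ref{lem:filter} gives $|\hat g|\le1$ using $L=\|\Phi'\|_{L^\infty}=\cO(-\log(\eta)/\delta)$.
\item Derivatives of $\hat w$ and of the exponential factor $e^{(\Phi(E_1)-\Phi(E_2))/4}$: by Faà di Bruno, every monomial has weighted degree at most $m+n$ in the derivatives of $\Phi$. Each such derivative is bounded by $(-\log\eta)\delta^{-l}\|\gamma^{(l)}\|_\infty$, giving $\cO((-\log(\eta)/\delta)^{l})$ for single derivatives and $\cO((-\log(\eta)/\delta)^{2l})$ for mixed ones.
\item Derivatives of the filter: $\partial^{(m,n)}\kappa$ must be bounded by $\kappa\cdot\poly(LS)\,S^{-(m+n)}$. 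The function $\sqrt{1+C(1-\cos(x\pi/S))}$ with $C=L^2S^2/4$ is smooth with $j$-th derivative $\cO(\sqrt C\,(\pi/S)^j)$, and the outer exponential contributes only polynomial factors times $\kappa$. The product rule then gives bounds of the form $\poly(-\log(\eta)S/\delta)\,S^{-l}$.
\end{itemize}

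Once all these quantities are $\poly(-\log(\eta)S/\delta)$ times powers of $S$, substituting into the Corollary gives two results. First, the gate count becomes $t|\kA|^2\polylog(t/\eps)\,\poly(-\log(\eta)S/\delta)$, multiplied by an $\eps$-dependence $\eps^{-1/(k+1)}$. That $\eps$ factor enters through the queries to the block-encodings of sizes $M,M'$. Second, the Hamiltonian simulation time becomes $S\sqrt[k+1]{|\kA|/\eps}\,\poly(\cdots)$, and the prefactor $S$ is absorbed into the polynomial.

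I expect the main obstacle to be the filter-derivative step. Its constants must depend only polynomially on $LS$ and not exponentially. This holds only because each derivative of $\kappa$ carries a factor $\kappa$ itself, which keeps $\hat g$ bounded even where the ratio $e^{(\Phi(E_1)-\Phi(E_2))/4}$ is as large as $\eta^{-1/4}$. I would check this with the same argument as in Lemma~\ref{lem:filter}, namely that $\kappa\,e^{(\Phi(E_1)-\Phi(E_2))/4}\le1$ uniformly and that derivative factors are polynomial. The $1/\kappa$ terms in the chain-rule expression then cancel. The remaining bookkeeping, such as $S\ge\delta+\max\{|b|,|c|\}$ and periodicity of $\Phi$, holds because $\Phi$ is constant near $\pm S$.
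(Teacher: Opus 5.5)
Your proposal is correct and follows essentially the paper's route: stationarity comes from KMS-detailed balance of $L_a=\sigma^{1/4}\tilde A_a\sigma^{-1/4}$ together with the matching $G$, and the complexity comes from chain-rule bounds written as $\partial\hat g=\hat g\cdot\poly(\cdots)$, with $|\hat g|\le1$ from Lemma~\ref{lem:filter}, plugged into the Corollary. One small slip, which does not affect the conclusion: the $j$-th derivative of $\sqrt{1+C(1-\cos(\pi x/S))}$ scales like $(\sqrt C\,\pi/S)^j$ rather than $\sqrt C\,(\pi/S)^j$ (its second derivative at $x=0$ is $C\pi^2/(2S^2)$), but with $C=L^2S^2/4$ this is $\cO(L^j)$ and still of your stated form $\poly(-\log(\eta)S/\delta)\,S^{-j}$.
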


\paragraph{Relation to trace norm error for window and ground states}
Note that so far, we have obtained perfect KMS-detailed balance for states defined via continuously differentiable functions. With the application of preparation of microcanonical ensembles in mind it is interesting to ask how well the resulting normalized fixed-points approximate the desired output states (defined in terms of the perfect window function $\chi_{[b,c]}$) in 1-norm.
Adversarial cases of Hamiltonians with spectrum concentrated near the edge of the window can be thought of so the error bounds will depend on a more refined understanding of the spectral density.
Note that for the preparation of inherently differentiable function $f$, this discussion is not needed.
This can include alternative definitions of microcanonical ensemble states that only require concentration around the target energy but not necessarily a perfectly flat distribution in a window.

Given $f$, an approximation to the window function as in the previous section, we compute
\begin{align*}
\left\|\frac{\chi_{[b,c]}(H)}{\Tr[\chi_{[b,c]}(H)]}-\frac{f(H)}{\Tr[f(H)]}\right\|_1&=
\frac{1}{\tr[f(H)]}\int_{I\setminus[b,c]}f(E)d\mu_H(E)\\
&\quad+\left(\frac1{\tr[\chi_{[b,c]}(H)]}-\frac1{\tr[f(H)]}\right)\int_{[b,c]}d\mu_H(E)\\
&=\cO\left(\frac1{\tr[\chi_{[b,c]}(H)]}\int_{I\setminus[b,c]}f(E)d\mu_H(E)\right)\\
&=\cO\left(\frac1{\tr[\chi_{[b,c]}(H)]}\left(\eta\int_{I\setminus[b-\delta,c+\delta]}d\mu_H(E)+\int_{[b-\delta,b]\cup[c,c+\delta]}d\mu_H(E)\right)\right)\\
&\le\cO\left(\frac1{\tr[\chi_{[b,c]}(H)]}\left(\eta+\int_{[b-\delta,b]\cup[c,c+\delta]}d\mu_H(E)\right)\right)\,.
\end{align*}
where $d\mu_H(E)$ is the normalized spectral measure and $\tr$ is the normalized trace.
Therefore, in order to obtain good approximations of a sharp window function we require that the spectrum has at least inverse polynomial measure on the target window $[b,c]$ and at most inverse polynomial decaying (in $\delta$) measure in the edges $[b-\delta,b]$ and $[c,c+\delta]$.

In addition, note that for isolated points in the spectrum including \emph{gapped ground states}, where the integral of the spectral measure over a delta interval can be assumed \emph{exactly} zero, even an exponentially small weight in the target interval can be dealt with in polynomial time by choosing an exponentially small $\eta$. In this case $[b,c]$ needs to be chosen such that it contains the isolated point, while no other energy eigenstates lie in $[b-\delta,c+\delta]$.

\section{Conclusion}
In this work, we extend constructions of KMS-detailed-balance Lindbladians to a general class of fixed-point states.
The general framework includes arbitrary states expressible as positive functions of the Hamiltonian up to technical conditions on their differentiability.
As a concrete application, we study window states using appropriate differentiable approximations, capturing both microcanonical ensembles and gapped ground states.

The question of mixing times, while addressed in certain regimes for Gibbs samplers in trivial phases, remains technically challenging and is left open for the constructions introduced in our work.
However, when applied to the Boltzmann function, our proposed algorithm essentially reduces to these prior works.

We also note an alternative approach that we do not pursue in this work. By interpreting the logarithm of the state as a modified parent Hamiltonian of a Gibbs state, one could directly apply existing Gibbs sampling algorithms  \cite{chen2023efficientexactnoncommutativequantum,ding2025polynomialtimepreparationlowtemperaturegibbs}.
Expressing this approach in terms of the original Hamiltonian requires an implementation of the parent Hamiltonian using quantum singular value transformations  (QSVT) \cite{Gilyen_2019}, which is expected to incur comparable costs to our method.
The resulting algorithms would nevertheless differ in structure.

This idea should be distinguished from directly applying singular value transformation to a block encoding of $H$, which gives a block encoding of $f(H)$ for some function $f$. While such approaches can be efficient in an end-to-end sense, the induced approximation properties differ substantially from ours. In particular, our method enables exponential suppression of parts of the spectrum. This feature is not naturally captured in standard QSVT approaches and  QSVT is not expected to yield efficient preparation of thermal or ground states without access to good initial states. 

Window-state preparation is also closely connected to the computational problem of determining the density of states (DOS), the number of eigenvalues of a local Hamiltonian in a given interval. It is known that computing the DOS is equivalent to computing the quantum partition function with respect to computational hardness~\cite{bravyi2022quantum}. However, no such equivalence is known for the corresponding state-preparation question; determining whether the hardness of preparing window-states is equivalent to that of preparing Gibbs states may also be an interesting future direction.

\section*{Acknowledgments}
We thank Zhiyan Ding and Oles Shtanko for insightful discussions.

\clearpage
\printbibliography

\appendix
\section{Tail bounds}
We evaluate some standard decay and truncation bounds for the Fourier coefficients.
\begin{lemma}\label{lem:fourierTailBounds}
For periodic, $k$ times continuously differentiable functions with periodic derivatives\footnote{In particular $f^{(l)}(-S)=f^{(l)}(S)$ $\forall l\le k$} $\hat f:[-S,S]^2\to\RR$, we have the following tail bounds on the Fourier series:
\begin{align*}
    \left|\hat f(E_1,E_2)-\sum_{(n_1,n_2)\in[-M,M]^2}f_{n_1,n_2}e^{i(n_1E_1+n_2E_2)\tau}\right|&\le\left(\frac{8}{\tau^k}\left\|\partial^{(k,k)}\hat f\right\|_{L^1}+\left\|\partial_1^k\hat f\right\|_{L^1}+\left\|\partial_2^k\hat f\right\|_{L^1}\right)\frac{2M^{1-k}}{\tau^k(k-1)}\\
\end{align*}
where $\tau=\pi/S$, $M\ge1$, and we use the convention for the Fourier coefficients
\begin{align*}
f_{n_1,n_2}&=\frac1{4S^2}\int_{I\times I}e^{-i(n_1E_1+n_2E_2)\tau}\hat f(E_1,E_2)d(E_1,E_2)
\end{align*}
with $I=[-S,S]$.
\end{lemma}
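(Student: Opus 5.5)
The approach is the standard one: bound the Fourier coefficients by integrating by parts, then sum the tail. First, since $\hat f$ and its derivatives up to order $k$ are periodic, integrating by parts $k$ times in $E_1$ produces no boundary terms. This gives, for $n_1\neq0$,
\[
f_{n_1,n_2}=\frac{1}{(in_1\tau)^k}\,\frac1{4S^2}\int_{I\times I}e^{-i(n_1E_1+n_2E_2)\tau}\,\partial_1^k\hat f\,d(E_1,E_2),
\]
and hence $|f_{n_1,n_2}|\le \|\partial_1^k\hat f\|_{L^1}/(|n_1|\tau)^k$ with the normalized $L^1$ norm. In the same way, $|f_{n_1,n_2}|\le \|\partial_2^k\hat f\|_{L^1}/(|n_2|\tau)^k$ for $n_2\ne0$. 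When both indices are nonzero, integrating by parts in both variables gives $|f_{n_1,n_2}|\le \|\partial^{(k,k)}\hat f\|_{L^1}/(|n_1n_2|\tau^{2k})$. Here I assume the mixed derivative $\partial^{(k,k)}$ exists, is continuous and periodic, which the hypothesis is meant to include.

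Next, the pointwise error is at most $\sum_{(n_1,n_2)\notin[-M,M]^2}|f_{n_1,n_2}|$. This holds because the Fourier series converges absolutely (a consequence of the decay bounds for $k\ge2$) and therefore equals $\hat f$ pointwise. I would split the complement of the box into three pieces.
\begin{itemize}
\item Region (i): $n_2=0$ and $|n_1|>M$. Using the $\partial_1^k$ bound gives $2\sum_{n>M}n^{-k}\le 2M^{1-k}/(k-1)$ by an integral comparison. This is the $\|\partial_1^k\hat f\|_{L^1}$ term.
\item Region (ii): $n_1=0$ and $|n_2|>M$. Symmetrically, this gives the $\partial_2^k$ term.
\item Region (iii): both indices nonzero and $\max(|n_1|,|n_2|)>M$. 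Using the mixed bound, the sum is at most $4\cdot 2\sum_{n_1>M}n_1^{-k}\sum_{n_2\ge1}n_2^{-k}$, where the factor $2$ accounts for which coordinate exceeds $M$ and the factor $4$ for the signs. Since $\sum_{n\ge1}n^{-k}\le\zeta(2)<2$, this is at most $8\cdot\frac{2M^{1-k}}{k-1}$ up to the constant. Together with the extra $\tau^{-k}$, this matches the $\frac{8}{\tau^k}\|\partial^{(k,k)}\hat f\|_{L^1}$ term.
\end{itemize}

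Collecting the three pieces gives the claimed form. The main obstacle is bookkeeping the constants in region (iii) so that they land exactly on the factor $8$. If $\zeta(k)\le 2$ and the sign and ordering factors do not combine into exactly $8$, I would instead use the slightly wasteful bound $\sum_{n_2\ne0}|n_2|^{-k}\le 2\zeta(2)\le 4$ and absorb the rest, accepting a constant that may differ by a small factor. A secondary point is justifying the mixed integration by parts; this needs the periodicity of $\partial_1^k\hat f$ in $E_2$, which I would take as part of the hypothesis of periodic derivatives.
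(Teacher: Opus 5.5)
Your proposal is correct and follows the paper's proof essentially step for step. Both arguments integrate by parts, with boundary terms killed by periodicity, and bound the coefficients through $\partial_1^k\hat f$, $\partial_2^k\hat f$ and $\partial^{(k,k)}\hat f$. Both then split the tail according to which index vanishes. In the doubly-nonzero region your factor $4\cdot 2\cdot\sum_{n\ge1}n^{-k}\le 16$ reproduces the paper's $\frac{8}{\tau^k}\cdot\frac{2M^{1-k}}{\tau^k(k-1)}$ term exactly; the paper uses $1+\frac{1}{k-1}$ in place of $\sum_{n\ge1}n^{-k}$. So your hedging about the constants is unnecessary. The only slip is a typo in the mixed coefficient bound, whose denominator should read $|n_1n_2|^k\tau^{2k}$.
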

\begin{proof}
Via partial integration we obtain the following decay bound on the Fourier coefficients for any $1\le l_1,l_2\le k$
\begin{align*}
|f_{n_1,n_2}|&=\left|\frac1{4S^2}\int_{I\times I}e^{-i(n_1E_1+n_2E_2)\tau}\hat f(E_1,E_2)d(E_1,E_2)\right|\nonumber\\
&\le\frac{1}{n_1^{l_1} n_2^{l_2}}\frac1{4S^2\tau^{l_1+l_2}}\int_{I\times I} \left|\partial^{(l_1,l_2)}\hat f(E_1,E_2)\right|d(E_1,E_2)\nonumber\\
&=\frac{1}{n_1^{l_1} n_2^{l_2}}\frac1{\tau^{l_1+l_2}}\left\|\partial^{(l_1,l_2)}\hat f(E_1,E_2)\right\|_{L_1}
\end{align*}
Here, the boundary terms disappear due to the periodicity assumption on the derivatives and equivalently
\begin{align}\label{eq:coeffiBound1D}
|f_{0,n}|&\le\frac{1}{n^{k}}\frac{1}{4 S^2\tau^{k}}\int_{I\times I}\left|\partial^{k}\hat f(E_1,E_2)\right|d(E_1,E_2)\nonumber\\
&\le\frac{1}{n^k\tau^k}\left\|\partial^k\hat f\right\|_{L_1}
\end{align}
We need the following standard bound on a sum over polynomially decaying coefficients in $\ZZ^2$:
\begin{equation}\label{eq:sumBound1D}
    \sum_{n\ge M+1}\frac1{n^k}\le\int_M^\infty \frac{1}{x^k}dx=\frac{1}{k-1}M^{1-k}
\end{equation}
and
\begin{align}\label{eq:sumBound2D}
\sum_{(n_1,n_2)\in(\ZZ_{\neq0})^2\setminus[-M,M]^2}\frac{1}{n_1^{k_1}n_2^{k_2}}\le2\sum_{n_1=M+1}^\infty\frac1{n_1^{k_1}}2\sum_{n_2=1}^\infty\frac{1}{n_2^{k_2}}+2\sum_{n_1=M+1}^\infty\frac1{n_1^{k_1}}2\sum_{n_2=1}^\infty\frac{1}{n_2^{k_2}}\\
\le 4\frac{1}{k_1-1}M^{1-k_1}\left(1+\frac{1}{k_2-1}\right)+4\frac{1}{k_2-1}M^{1-k_2}\left(1+\frac{1}{k_1-1}\right)
\,.
\end{align}
Putting Equations \eqref{eq:sumBound1D} and \eqref{eq:sumBound2D} together we obtain the following tail bound for a $k$ times continuously differentiable function
\begin{align*}
\sum_{(n_1,n_2)\in\ZZ^2\setminus[-M,M]^2}\left|f_{n_1,n_2}\right|&\le\frac1{\tau^{2k}}\left\|\partial^{(k,k)} \hat f\right\|_{L^1}\frac{16M^{1-k}}{k-1}\\
&\phantom{=}+\left(\left\|\partial_1^k\hat f\right\|_{L^1}+\left\|\partial_2^k\hat f\right\|_{L^1}\right)\frac{2M^{1-k}}{\tau^k(k-1)}\\
&\le\left(\frac{8}{\tau^k}\left\|\partial^{(k,k)}\hat f\right\|_{L^1}+\left\|\partial_1^k\hat f\right\|_{L^1}+\left\|\partial_2^k\hat f\right\|_{L^1}\right)\frac{2M^{1-k}}{\tau^k(k-1)}\,.
\end{align*}
\end{proof}
\begin{lemma}\label{lem:fourierL1Bounds}
Under the same conditions as in Lemma~\ref{lem:fourierTailBounds}, the Fourier coefficients obey the following $l_1$ bounds.
\begin{align*}
\|f_{n_1,n_2}\|_{l^1}&\le10(1+\|\hat f\|_{L^1})\left(1+\frac2{\tau^k}\left(\frac{8}{\tau^k}\left\|\partial^{(k,k)}\hat f\right\|_{L^1}+\left\|\partial_1^k\hat f\right\|_{L^1}+\left\|\partial_2^k\hat f\right\|_{L^1}\right)\right)^{\frac2{k+1}}
\end{align*}
\end{lemma}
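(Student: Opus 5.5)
The plan is to split $\ZZ^2$ into a central box $[-N,N]^2$ and its complement, and to choose the cutoff $N$ so that the two contributions are balanced. I will write
\[
K:=\frac2{\tau^k}\left(\frac{8}{\tau^k}\left\|\partial^{(k,k)}\hat f\right\|_{L^1}+\left\|\partial_1^k\hat f\right\|_{L^1}+\left\|\partial_2^k\hat f\right\|_{L^1}\right)
\]
for the constant that appears in the tail bound. The proof of Lemma~\ref{lem:fourierTailBounds}, before it is converted into a pointwise statement, already gives
\[
\sum_{(n_1,n_2)\in\ZZ^2\setminus[-N,N]^2}|f_{n_1,n_2}|\le \frac{K N^{1-k}}{k-1}\le K N^{1-k}
\]
for $k\ge2$ and $N\ge1$. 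This controls the complement, so only the central box remains.

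On the central box I will use the trivial bound $|f_{n_1,n_2}|\le \frac{1}{4S^2}\int_{I\times I}|\hat f|=\|\hat f\|_{L^1}$, which holds because of the normalized $L^1$ convention. There are $(2N+1)^2\le 9N^2$ such coefficients, so this part contributes at most $9N^2\|\hat f\|_{L^1}$. Altogether
\[
\|f\|_{l^1}\le 9N^2\|\hat f\|_{L^1}+KN^{1-k}\le 9(1+\|\hat f\|_{L^1})\left(N^2+KN^{1-k}\right).
\]

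Next I choose $N$ to balance the two terms, namely $N=\lceil (1+K)^{1/(k+1)}\rceil$. This $N$ is at least $1$, so the tail bound applies. With this choice $KN^{1-k}\le (1+K)N^{-(k+1)}N^{2}\le N^2$, and $N^2\le (2(1+K)^{1/(k+1)})^2$. The total is then of order $(1+\|\hat f\|_{L^1})(1+K)^{2/(k+1)}$, which has exactly the claimed form. The first bracket in the statement is $(1+K)$ with $K$ as defined above, matching the $\frac{2}{\tau^k}(\cdots)$ expression.

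The main obstacle is the numerical constant $10$. The crude route above gives roughly $9\cdot 2\cdot 4=72$, because the ceiling can double $N$ and there is also the factor $(2N+1)^2$. To bring the constant down I would do the following. First, bound the box count more carefully for integer $N\ge1$. Second, treat the case $(1+K)^{1/(k+1)}<2$ separately with $N=1$, where the box has 9 points and the tail contributes at most $K\le (1+K)^{2/(k+1)}\cdot$const. Third, for larger $K$, use $\lceil x\rceil\le x+1\le \tfrac32x$ once $x\ge2$, and use that the tail term is at most $N^2$. If this still fails to reach $10$, the constant is immaterial for all later uses, which only need the asymptotic $(1+K)^{2/(k+1)}$ scaling, so I would state the bound with a universal constant $C$ in place of $10$.
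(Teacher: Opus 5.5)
You follow the paper's route: bound the coefficients in $[-N,N]^2$ trivially by $\|\hat f\|_{L^1}$, bound the complement by the $l^1$ tail estimate from the proof of Lemma~\ref{lem:fourierTailBounds}, and balance with $N\sim(\cdot)^{1/(k+1)}$. The only difference is that the paper takes $M=\lceil (K/\|\hat f\|_{L^1})^{1/(k+1)}\rceil$ where you take $\lceil (1+K)^{1/(k+1)}\rceil$. Your main computation is correct and gives the claimed form with a universal constant. You count $72$; using $9\|\hat f\|_{L^1}N^2+KN^{1-k}\le (9\|\hat f\|_{L^1}+1)N^2$ gives $36$.

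What you have not established is the constant $10$, and your repair fails at the second step. With $N=1$ the tail is $K/(k-1)$. In the case $(1+K)^{1/(k+1)}<2$, $K$ can be close to $2^{k+1}$ while $(1+K)^{2/(k+1)}<4$. So for small $\|\hat f\|_{L^1}$ and large $k$ no $k$-independent constant absorbs it. More generally, with the integral estimate $\sum_{n>N}n^{-k}\le N^{1-k}/(k-1)$ no $N\ge1$ gives $10$. For $\|\hat f\|_{L^1}=1$, $K=10^6$, $k=1000$, the choice $N=1$ gives about $1010$, and every $N\ge2$ gives at least $(2N+1)^2\ge 25$, while the right-hand side is about $20.6$. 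The paper's final ``Then'' step has the same defect. For $\|\hat f\|_{L^1}=1000$, $k=100$, $K=1000\cdot 1.001^{101}$, its $M$ equals $2$, and its displayed inequality gives about $36000$ against a right-hand side of about $11500$. So your fallback to a universal constant is what both arguments actually prove, and it is all the later sections need, since only the $(1+K)^{2/(k+1)}$ scaling is used downstream.

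The missing idea, if you want $10$, is to keep the first term of the tail sum. For every integer $N\ge 0$,
\[
\sum_{n\ge N+1}n^{-k}\le (N+1)^{-k}\Big(1+\frac{N+1}{k-1}\Big).
\]
The bookkeeping in the proof of Lemma~\ref{lem:fourierTailBounds}, using $\sum_{n\ge1}n^{-k}\le 2$, shows that all of $\ZZ^2\setminus[-N,N]^2$ contributes at most $K\sum_{n\ge N+1}n^{-k}$, including for $N=0$. Now set $y=(1+K)^{1/(k+1)}\ge1$ and $N=\lceil y\rceil-1$. Then $(2N+1)^2<(2y+1)^2\le 9y^2$. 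Since $N+1\ge y$ and $K<y^{k+1}$, the tail is at most $y(1+\lceil y\rceil)\le y(y+2)\le 3y^2$. Hence $\|f\|_{l^1}\le 9\|\hat f\|_{L^1}y^2+3y^2\le 9(1+\|\hat f\|_{L^1})(1+K)^{2/(k+1)}$, which is the stated bound.
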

\begin{proof}
Note that the Fourier coefficients are bounded by the $\|\cdot\|_{L_1}$-norm of the function so for any $M$, we can combine these trivial bounds for the coefficients up to $|n_1|,|n_2|\le M$ and the previous tails bounds :
\begin{align*}
\|f\|_{l^1}&\le\|\hat f\|_{L^1} (2M+1)^2+\left(\frac{8}{\tau^k}\left\|\partial^{(k,k)}\hat f\right\|_{L^1}+\left\|\partial_1^k\hat f\right\|_{L^1}+\left\|\partial_2^k\hat f\right\|_{L^1}\right)\frac{2M^{1-k}}{\tau^k(k-1)}\\
&\le\|\hat f\|_{L^1} 9M^2+\left(\frac{8}{\tau^k}\left\|\partial^{(k,k)}\hat f\right\|_{L^1}+\left\|\partial_1^k\hat f\right\|_{L^1}+\left\|\partial_2^k\hat f\right\|_{L^1}\right)\frac{2M^{1-k}}{\tau^k(k-1)}
\end{align*}
and we set 
\[
M=\left\lceil\sqrt[k+1]{\frac{\frac{2}{\tau^k}\left(\frac{8}{\tau^k}\left\|\partial^{(k,k)}\hat f\right\|_{L^1}+\left\|\partial_1^k\hat f\right\|_{L^1}+\left\|\partial_2^k\hat f\right\|_{L^1}\right)}{\|\hat f\|_{L^1}}}\right\rceil.
\]
Then
\[
\|f\|_{l^1}\le10(1+\|\hat f\|_{L^1})\left(1+\frac2{\tau^k}\left(\frac{8}{\tau^k}\left\|\partial^{(k,k)}\hat f\right\|_{L^1}+\left\|\partial_1^k\hat f\right\|_{L^1}+\left\|\partial_2^k\hat f\right\|_{L^1}\right)\right)^{\frac2{k+1}}.
\]
\end{proof}

\section{Filter function}
We discuss the filter function chosen in our approach.
Our choice is
\begin{align*}
\nu_{C,\zeta}(x)&=\exp(-\sqrt{1+C(1-\cos(x\pi/\zeta))})\\
\kappa(E_k,E_l)&=\nu_{C,\zeta}(E_k-E_l)\,.
\end{align*}

We summarize the required properties in the following lemma.
\begin{lemma}\label{lem:filter}
For the choice $C=S^2L^2/32$, and $\zeta=S$, we have the following
\begin{itemize}
    \item $\kappa$ is $2S$-periodic in both arguments
    \item $\hat g(E_k,E_l)\le1$
    \item $\nu_{C,\zeta}(x)$ is analytic with 
    \[
\left\|\frac1{\nu_{C,\zeta}(x)}\frac{d^k}{dx^k}\nu_{C,\zeta}(x)\right\|_{L^\infty}= \mathcal O\left(\left(\frac{C\pi}{\zeta}\right)^k\right)
\]
\end{itemize}
\end{lemma}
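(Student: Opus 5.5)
The three claims are proved separately, in the order listed, and the second is the substantive one.

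\emph{Periodicity and analyticity.} Periodicity is immediate. $\nu_{C,\zeta}$ depends on $x$ only through $\cos(x\pi/\zeta)$. With $\zeta=S$ this has period $2S$ in $x$. Hence $\kappa(E_k,E_l)=\nu(E_k-E_l)$ is $2S$-periodic in each argument separately. Analyticity also follows from the form of $\nu$. The argument $u(x)=1+C(1-\cos(x\pi/\zeta))$ of the square root satisfies $u\ge1$ on $\RR$. The square root is analytic on a complex neighbourhood of $[1,\infty)$, so $\nu$ is a composition of analytic maps.

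\emph{Uniform bound $\hat g\le1$.} Here I want to show $\tfrac14|\Phi(E_k)-\Phi(E_l)|\le\sqrt{u(E_k-E_l)}$, since then
\[
|\hat g|=\exp\Big(\tfrac14(\Phi(E_k)-\Phi(E_l))-\sqrt{u(E_k-E_l)}\Big)\le1 .
\]
\begin{enumerate}
\item Since $\Phi$ is $L$-Lipschitz and $2S$-periodic, $|\Phi(E_k)-\Phi(E_l)|\le L\,d(E_k,E_l)$, where $d$ is the distance on the circle of length $2S$. Write $y=E_k-E_l$ and $d=d(y)\le S$.
\item Use $1-\cos\theta=2\sin^2(\theta/2)$ together with Jordan's inequality $|\sin(\theta/2)|\ge|\theta|/\pi$ for $|\theta|\le\pi$. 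With $\theta=d\pi/S$ this gives $1-\cos(y\pi/S)\ge 2d^2/S^2$.
\item Hence $\sqrt{u}\ge\sqrt{C}\,\sqrt2\,d/S$. Requiring this to dominate $Ld/4$ yields $C\ge S^2L^2/32$, which is exactly the stated choice.
\end{enumerate}
The $+1$ under the root is not needed for this bound; it only keeps $u$ away from zero, which is what makes $\nu$ smooth in the analyticity argument above.

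\emph{Logarithmic derivative bound.} I would compute $\nu^{(k)}/\nu$ via Faà di Bruno applied to $\nu=e^{-h}$ with $h=\sqrt{u}$. This gives $\nu^{(k)}/\nu$ as a polynomial in $h',\dots,h^{(k)}$, in which each monomial $\prod_j h^{(j)}$ has total derivative order $k$. The derivatives of $u$ satisfy $|u^{(j)}|\le C(\pi/\zeta)^j$. The derivatives of $\sqrt{\cdot}$ at $u\ge1$ are bounded by constants $c_j u^{1/2-j}$. A second application of Faà di Bruno then gives $|h^{(j)}|=\cO\big(\sum_m u^{1/2-m}C^m(\pi/\zeta)^j\big)$. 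Since $C^m\le\max(1,C)^j$ for $m\le j$, this is $\cO\big((\max(1,C)\pi/\zeta)^j\big)$. Multiplying out the monomials gives $\cO\big((C\pi/\zeta)^k\big)$ for $C\ge1$, with constants depending only on $k$. The case $C<1$ is absorbed by reading the bound as $\cO\big((\max(1,C)\pi/\zeta)^k\big)$.

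\emph{Main obstacle.} I expect the hard step to be the inequality $\sqrt{u(y)}\ge L\,d(y)/4$ in the second claim. The constant in $C$ depends on the Jordan inequality and on using the periodic distance $d$ rather than $|y|$. If the stated $1/32$ turns out not to match the computed constant exactly, I would adjust $C$ by a constant factor; this does not affect the asymptotic claims.
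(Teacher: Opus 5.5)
Your proposal is correct and, for the first two items, follows the paper's own argument. The paper uses the same estimate $1-\cos(x\pi)=2\sin^2(x\pi/2)\ge 2x^2$ on $|x|\le 1$ (your Jordan-inequality step) for $|E_k-E_l|\le S$, handles $|E_k-E_l|>S$ by shifting with the $2S$-periodicity of $\Phi$ (your circle distance $d$), and obtains exactly $C=S^2L^2/32$.

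For the third item the paper only asserts the scaling, so your Fa\`a di Bruno argument fills in a step the paper skips rather than departing from it. Your caveat is also correct: for $k\ge2$ and small $C$ one has $\nu^{(k)}/\nu=\Theta\bigl(C(\pi/\zeta)^k\bigr)$, so the bound must be read as $\mathcal O\bigl((\max(1,C)\pi/\zeta)^k\bigr)$. This is harmless in the regime of interest, where $C\gtrsim 1$.
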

\begin{proof}
For the first point note that for $|x|\le\zeta$
\[
\exp(-\sqrt{1+C(1-\cos(x\pi/\zeta))})\le \exp(-\sqrt{C(1-\cos(x\pi/\zeta))})\le\exp(-\frac{|x|\sqrt{2C}}{\zeta})
\]
since $1-\cos(x\pi)=2\sin^2(x\pi/2)\ge 2x^2$.

So we get for $|E_k-E_l|\le S$
\begin{align*}
\hat g(E_k,E_l)&=\exp((\Phi(E_k)-\Phi(E_l))/4)\nu_{C,S}(E_k-E_l)\\
&\le\exp(L|E_k-E_l|/4)\exp(-\frac{|E_k-E_l|\sqrt{2C}}{S})\le1
\end{align*}
by choosing $C=L^2S^2/32$.
For $2S\ge E_k-E_l>S$ we use the periodicity and $E_k-E_l-2S\in[-S,S]$ to bound
\begin{align*}
\hat g(E_k,E_l)&=\exp((\Phi(E_k-2S)-\Phi(E_l))/4)\nu_{C,S}(E_k-E_l-2S)\\
&\le \exp(L|E_k-E_l-2S|/4)\exp(-\frac{|E_k-E_l-2S|\sqrt{2C}}{S})\le 1
\end{align*}
and analogously for $-2S\le E_k-E_l\le S$.

For the differentiability, note that $\nu$ is analytic on $\RR$ with all derivatives uniformly bounded due to the periodicity. We merely consider the scaling of the derivatives with the parameters,
\[
\left\|\frac1{\nu_{C,\zeta}(x)}\frac{d^k}{dx^k}\nu_{C,\zeta}(x)\right\|_{L^\infty}= \mathcal O\left(\left(\frac{C\pi}{\zeta}\right)^k\right)\,.
\]
\end{proof}
\end{document}